\providecommand{\algorithmname}{Algorithm}
\let\oldforeign@language\foreign@language
\DeclareRobustCommand{\foreign@language}[1]{%
	\lowercase{\oldforeign@language{#1}}}
\let\oldforeign@language\foreign@language
\DeclareRobustCommand{\foreign@language}[1]{%
	\lowercase{\oldforeign@language{#1}}}
\newcommand{\MYfooter}{\smash{
		\hfil\parbox[t][\height][t]{\textwidth}{\centering
			\thepage}\hfil\hbox{}}}
\def\ps@IEEEtitlepagestyle{%
	\def\@oddhead{\parbox[t][\height][t]{\textwidth}{\centering \scriptsize
			Personal use of this material is permitted. Permission from the author(s) and/or copyright holder(s), must be obtained for all other uses. Please contact us and provide details if you believe this document breaches copyrights.\\
			\noindent\makebox[\linewidth]{}
		}\hfil\hbox{}}%
	\def\@evenhead{\scriptsize\thepage \hfil \leftmark\mbox{}}%
	\def\@oddfoot{\parbox[t][\height][l]{\textwidth}{
			\vspace{-20pt}{\rule{\textwidth}{0.4pt}}\\ \footnotesize\underline{To cite this article:}
			{\bf{\footnotesize\textcolor{red}{H. A. Hashim, M. Abouheaf, and M. A. Abido "Geometric Stochastic Filter with Guaranteed Performance for Autonomous Navigation based on IMU and Feature Sensor Fusion," Control Engineering Practice, vol. 116, pp. 104926, 2021.}}} doi: \href{https://doi.org/10.1016/j.conengprac.2021.104926}{10.1016/j.conengprac.2021.104926}\\
			\noindent\makebox[\linewidth]
		}\hfil\hbox{}}%
	\def\@evenfoot{\MYfooter}}
\newtheorem{defn}{Definition}
\newtheorem{lem}{Lemma}
\newtheorem{thm}{Theorem}
\newtheorem{rem}{Remark}
\newtheorem{assum}{Assumption}
\begin{document}
	\bstctlcite{IEEEexample:BSTcontrol}

	\title{Geometric Stochastic Filter with Guaranteed Performance for Autonomous Navigation based on IMU and Feature Sensor Fusion}

\author{Hashim A. Hashim, Mohammed Abouheaf, and Mohammad A. Abido
	\thanks{This work was supported in part by Thompson Rivers University Internal research fund \# 102315.}
	\thanks{Corresponding author, H. A. Hashim is with the Department of Engineering and Applied Science, Thompson Rivers University, Kamloops, British Columbia, Canada, V2C-0C8, e-mail: hhashim@ieee.org, M. Abouheaf is with College of Technology, Architecture \& Applied Engineering, Bowling Green State University, Bowling Green, 43402, OH, USA, and M. A. Abido is with Electrical Engineering Department, King Fahd University of Petroleum and Minerals, Dhahran 31261, Saudi Arabia.
}
}



\maketitle

\begin{abstract}
This paper concerns the estimation problem of attitude, position,
and linear velocity of a rigid-body autonomously navigating with six
degrees of freedom (6 DoF). The navigation dynamics are highly nonlinear
and are modeled on the matrix Lie group of the extended Special Euclidean
Group $\mathbb{SE}_{2}(3)$. A computationally cheap geometric nonlinear
stochastic navigation filter is proposed on $\mathbb{SE}_{2}(3)$
with guaranteed transient and steady-state performance. The proposed
filter operates based on a fusion of sensor measurements collected
by a low-cost inertial measurement unit (IMU) and features (obtained
by a vision unit). The closed loop error signals are guaranteed to
be almost semi-globally uniformly ultimately bounded in the mean square
from almost any initial condition. The equivalent quaternion representation
is included in the Appendix. The filter is proposed in continuous
form, and its discrete form is tested on a real-world dataset of measurements
collected by a quadrotor navigating in three dimensional (3D) space.
\end{abstract}

\begin{IEEEkeywords}
	Localization, navigation, position and orientation estimation, stochastic systems, stochastic differential equation, Gaussian noise, sensor fusion.
\end{IEEEkeywords}

\IEEEpeerreviewmaketitle{}

\rule{0.47\textwidth}{1pt}\\
Please visit the last page for Bibtex citation and video URL of the experiement\\
\vspace{-1pt}
\rule{0.49\textwidth}{1pt}

\section{Introduction}

\IEEEPARstart{R}{obust} and accurate navigation solutions for autonomous vehicles are
essential \cite{barrau2016invariant,chebly2019coupled,hashim2020SE3Stochastic,hua2018attitude,hashim2020TITS_SLAM,bucsoniu2020learning}.
Indoor and outdoor applications, such as household cleaning devices,
pipelines, terrain mapping, reef monitoring, exemplify situations
when GPS might be unreliable and only low-cost measurement units (\textit{e.g.},
inertial measurement unit (IMU)) might be available. In such a case
GPS-independent navigation solutions are indispensable. A typical
low-cost IMU module is composed of an accelerometer and a gyroscope
which provide measurements of rigid-body's acceleration and angular
velocity, respectively. In the absence of GPS, a cost-effective autonomous
vehicle requires navigation solutions that rely on low-cost IMU and
feature measurements collected by a vision unit. Consequently, linear
velocity cannot be measured and its integration is impracticable due
to the unbounded error growth resulting from sensor noise and bias
\cite{woodman2007introduction}. Hence, autonomous navigation in space
requires estimation of orientation (known as attitude), position,
and linear velocity. Attitude can be successfully extracted given
a low-cost IMU module using Gaussian filters \cite{markley2003attitude,zamani2013minimum,choukroun2006novel},
nonlinear deterministic filters on the Special Orthogonal Group $\mathbb{SO}(3)$
\cite{lee2012exponential,mahony2008nonlinear,batista2012globally}
or nonlinear stochastic filters on $\mathbb{SO}(3)$ \cite{hashim2019SO3Wiley}.
The nonlinear deterministic filters in \cite{lee2012exponential,mahony2008nonlinear,hashim2019SE3Det,vasconcelos2010nonlinear,baldwin2009nonlinear}
have been proven to be almost globally asymptotically stable, while
nonlinear stochastic filters in \cite{hashim2019SO3Wiley} are guaranteed
to be almost globally asymptotically stable with a probability of
one in mean square.

An inertial vision unit composed of a stereo vision unit and an IMU
can be employed to extract rigid-body's pose - a combination of attitude
and position. Pose estimation is commonly approached using Gaussian
filters (Kalman-type filters) \cite{janabi2010kalman}. Nonetheless,
nonlinear deterministic filters \cite{vasconcelos2010nonlinear,baldwin2009nonlinear,hua2011observer}
and nonlinear stochastic filters \cite{hashim2020SE3Stochastic} evolved
on the Special Euclidean Group $\mathbb{SE}(3)$ have been deemed
more suitable. Nonlinear deterministic pose filters have been shown
to be almost globally asymptotically stable \cite{hashim2019SE3Det,vasconcelos2010nonlinear,baldwin2009nonlinear},
while nonlinear stochastic pose filters are guaranteed to be semi-globally
uniformly ultimately bounded in mean square \cite{hashim2020SE3Stochastic}.
It should be remarked that although the use of low-cost IMU and vision
units facilitates the development of cost-effective autonomous vehicles,
it requires robust solutions that accommodate for the measurement
uncertainties.

The true six degrees of freedom (6 DoF) vehicle navigation dynamics
are composed of attitude, position, and linear velocity dynamics modeled
on the Lie group of $\mathbb{SE}_{2}(3)$. The true dynamics are highly
nonlinear, and are neither right nor left invariant. Navigation problem
has been addressed using Kalman-type filters (KF) \cite{davari2016asynchronous,bijker2008kalman},
extended Kalman filters (EKF) \cite{anderson2012optimal}, unscented
Kalman filters (UKF) \cite{zhang2005navigation}, multiplicative extended
Kalman filter (MEKF) \cite{leishman2015multiplicative}, and particle
filters (PFs) \cite{zhao2014particle}. However, the navigation solutions
involving KF, EKF, UKF, MEKF, and PFs are not posed on $\mathbb{SE}_{2}(3)$.
In view of the nonlinearity of the navigation dynamics, several solutions
have been developed on the Lie group of $\mathbb{SE}_{2}(3)$, including
invariant extended Kalman filter (IEKF) \cite{barrau2016invariant},
a Riccati observer \cite{hua2018riccati}, and a nonlinear stochastic
observer \cite{hashim2021ACC}. Stochastic filters have been useful
in several applications, for instance distributed delays \cite{wei2021input}
and simultaneous localization and mapping \cite{Hashim2021AESCTE}.
The main shortcomings of these solutions are 1) the disregard for
the IMU measurement noise and 2) the lack of transient and steady-state
performance measures. Guaranteed convergence can be achieved by applying
a prescribed performance function (PPF) \cite{bechlioulis2008robust}.
PPF manipulates the error to initiate within a known large set and
to decay systematically ultimately settling within a known small set
\cite{bechlioulis2008robust}.

To summarize, the true navigation dynamics are 1) highly nonlinear,
2) posed on $\mathbb{SE}_{2}(3)$, and 3) reliant on angular velocity
and acceleration. Addressing all of the above-mentioned navigation
dynamics characteristics, 
\begin{enumerate}
	\item[1)] this paper introduces a geometric nonlinear stochastic filter for
	inertial navigation posed on $\mathbb{SE}_{2}(3)$ reliant on measurements
	supplied by a typical low-cost IMU module, namely angular velocity
	and acceleration measurements commonly corrupted by noise,
	\item[2)] the proposed stochastic filter is characterized by guaranteed transient
	and steady-state performance of the attitude and position error,
	\item[3)] the closed loop error signals are shown to be almost semi-globally
	uniformly ultimately bounded in the mean square,
	\item[4)] although the stochastic filter is developed in a continuous form,
	its discrete representation is included, and
	\item[5)] the computational inexpensiveness of the proposed stochastic filter
	is demonstrated using a real-world three dimensional (3D) quadrotor
	dataset tested at a low sampling rate.
\end{enumerate}
The proposed navigation filter is applicable to unmanned aerial vehicles
as well as ground robots.

The paper is composed of seven sections: Section \ref{sec:Preliminary-Material}
presents important notation and preliminaries. Section \ref{sec:SE3_Problem-Formulation}
discusses the true navigation problem in a stochastic sense, available
measurements, and error criteria. Section \ref{sec:Guaranteed-Performance}
reformulates the stochastic dynamics to follow systematic measures.
Section \ref{sec:Non-Nav-Observer1} proposes a novel nonlinear stochastic
filter. Section \ref{sec:SE3_Simulations} reveals experimental results.
Lastly, Section \ref{sec:SE3_Conclusion} concludes the work.

\section{Preliminary Material\label{sec:Preliminary-Material}}

A set of real numbers, nonnegative real numbers, and an $n$-by-$m$
real dimensional space are defined by $\mathbb{R}$, $\mathbb{R}_{+}$,
and $\mathbb{R}^{n\times m}$, respectively. The Euclidean norm of
$x\in\mathbb{R}^{n}$ is $||x||=\sqrt{x^{\top}x}$, while the Frobenius
norm of $M\in\mathbb{R}^{n\times m}$ is $||M||_{F}=\sqrt{{\rm Tr}\{MM^{*}\}}$
where $*$ denotes a conjugate transpose. $\mathbf{I}_{n}$ is an
$n$-by-$n$ identity matrix. For $M\in\mathbb{R}^{n\times n}$, the
set of eigenvalues is $\lambda(M)=\{\lambda_{1},\lambda_{2},\ldots,\lambda_{n}\}$
where $\overline{\lambda}_{M}=\overline{\lambda}(M)$ and $\underline{\lambda}_{M}=\underline{\lambda}(M)$
are the maximum and the minimum of $\lambda(M)$, respectively. $0_{n\times m}$
is a zero matrix and $1_{n\times m}$ is an $n$-by-$m$ matrix of
ones. $\mathbb{P}\{\cdot\}$ and $\mathbb{E}[\cdot]$ denote a probability
and an expected value of a component, respectively. The fixed inertial-
and body-frame are represented by $\left\{ \mathcal{I}\right\} $
and $\left\{ \mathcal{B}\right\} $, respectively. The vehicle's attitude
is denoted by $R\in\mathbb{SO}\left(3\right)$ with $\mathbb{SO}\left(3\right)$
defined by 
\[
\mathbb{SO}(3)=\{R\in\mathbb{R}^{3\times3}|RR^{\top}=R^{\top}R=\mathbf{I}_{3}\text{, }{\rm det}(R)=+1\}
\]
where ${\rm det}(\cdot)$ is a determinant. $\mathfrak{so}(3)$ is
the Lie algebra of $\mathbb{SO}(3)$ given by 
\[
\mathfrak{so}\left(3\right)=\{[x]_{\times}\in\mathbb{R}^{3\times3}|[x]_{\times}^{\top}=-[x]_{\times},x\in\mathbb{R}^{3}\}
\]
where $[x]_{\times}$is a skew symmetric matrix defined by
\begin{align*}
	\left[x\right]_{\times} & =\left[\begin{array}{ccc}
		0 & -x_{3} & x_{2}\\
		x_{3} & 0 & -x_{1}\\
		-x_{2} & x_{1} & 0
	\end{array}\right]\in\mathfrak{so}\left(3\right),\hspace{1em}x=\left[\begin{array}{c}
		x_{1}\\
		x_{2}\\
		x_{3}
	\end{array}\right]
\end{align*}
$\mathbf{vex}:\mathfrak{so}\left(3\right)\rightarrow\mathbb{R}^{3}$
describes the inverse mapping of $[\cdot]_{\times}$ such that $\mathbf{vex}([x]_{\times})=x,\forall x\in\mathbb{R}^{3}$.
$\boldsymbol{\mathcal{P}}_{a}$ denotes the anti-symmetric projection
on the $\mathfrak{so}\left(3\right)$ given by 
\[
\boldsymbol{\mathcal{P}}_{a}(M)=\frac{1}{2}(M-M^{\top})\in\mathfrak{so}\left(3\right),\forall M\in\mathbb{R}^{3\times3}
\]
Let $\boldsymbol{\Upsilon}=\mathbf{vex}\circ\boldsymbol{\mathcal{P}}_{a}$
denote the composition mapping such that 
\[
\boldsymbol{\Upsilon}(M)=\mathbf{vex}(\boldsymbol{\mathcal{P}}_{a}(M))\in\mathbb{R}^{3},\forall M\in\mathbb{R}^{3\times3}
\]
Consider $||R||_{{\rm I}}$ as the Euclidean distance of $R\in\mathbb{SO}\left(3\right)$
expressed by
\begin{equation}
||R||_{{\rm I}}=\frac{1}{4}{\rm Tr}\{\mathbf{I}_{3}-R\}\in\left[0,1\right]\label{eq:NAV_Ecul_Dist}
\end{equation}
It is worth noting that $-1\leq{\rm Tr}\{R\}\leq3$ and $||R||_{{\rm I}}=\frac{1}{8}||\mathbf{I}_{3}-R||_{F}^{2}$,
visit \cite{hashim2019SO3Wiley}. Let the attitude, position, and
linear velocity of a vehicle navigating in 3D space be $R\in\mathbb{SO}\left(3\right)$,
$P\in\mathbb{R}^{3}$, and $V\in\mathbb{R}^{3}$, respectively, for
all $R\in\{\mathcal{B}\}$ and $P,V\in\{\mathcal{I}\}$. The Special
Euclidean Group contains $R$ and $P$ defined by $\mathbb{SE}\left(3\right)=\mathbb{SO}\left(3\right)\times\mathbb{R}^{3}\subset\mathbb{R}^{4\times4}$,
visit \cite{hashim2020SE3Stochastic}. The extended form of $\mathbb{SE}\left(3\right)$
is $\mathbb{SE}_{2}\left(3\right)=\mathbb{SO}\left(3\right)\times\mathbb{R}^{3}\times\mathbb{R}^{3}\subset\mathbb{R}^{5\times5}$
defined by
\begin{align}
	\mathbb{SE}_{2}\left(3\right) & =\{\left.X\in\mathbb{R}^{5\times5}\right|R\in\mathbb{SO}\left(3\right),P,V\in\mathbb{R}^{3}\}\label{eq:NAV_SE2_3}\\
	X=\Psi(R & ,P,V)=\left[\begin{array}{ccc}
		R & P & V\\
		0_{1\times3} & 1 & 0\\
		0_{1\times3} & 0 & 1
	\end{array}\right]\in\mathbb{SE}_{2}\left(3\right)\label{eq:NAV_X}
\end{align}
where $X\in\mathbb{SE}_{2}\left(3\right)$ is a homogeneous navigation
matrix composed of rigid-body's attitude, position and linear velocity.
Define $\mathcal{U}_{m}=\mathfrak{so}\left(3\right)\times\mathbb{R}^{3}\times\mathbb{R}^{3}\times\mathbb{R}\subset\mathbb{R}^{5\times5}$
as follows:
\begin{align}
	\mathcal{U}_{m} & =\left\{ \left.u([\Omega\text{\ensuremath{]_{\times}}},V,a,\kappa)\right|[\Omega\text{\ensuremath{]_{\times}}}\in\mathfrak{so}(3),V,a\in\mathbb{R}^{3},\kappa\in\mathbb{R}\right\} \nonumber \\
	& u([\Omega\text{\ensuremath{]_{\times}}},V,a,\kappa)=\left[\begin{array}{ccc}
		[\Omega\text{\ensuremath{]_{\times}}} & V & a\\
		0_{1\times3} & 0 & 0\\
		0_{1\times3} & \kappa & 0
	\end{array}\right]\in\mathcal{U}_{m}\subset\mathbb{R}^{5\times5}\label{eq:NAV_u}
\end{align}
where $\Omega\in\mathbb{R}^{3}$, $V\in\mathbb{R}^{3}$, and $a\in\mathbb{R}^{3}$
denote the vehicle's true angular velocity, linear velocity, and apparent
acceleration comprised of all non-gravitational forces affecting the
vehicle, respectively, for all $\Omega,a\in\{\mathcal{B}\}$. 

\section{Problem Formulation\label{sec:SE3_Problem-Formulation}}

From \eqref{eq:NAV_X}, the true dynamics of a rigid-body navigating
in 3D space are given by
\begin{equation}
\begin{cases}
\dot{R} & =R\left[\Omega\right]_{\times}\\
\dot{P} & =V\\
\dot{V} & =Ra+\overrightarrow{\mathtt{g}}
\end{cases}\label{eq:NAV_Detailed_True_dot}
\end{equation}
where $R\in\mathbb{SO}\left(3\right)$, $P\in\mathbb{R}^{3}$, and
$V\in\mathbb{R}^{3}$ represent attitude, position, and linear velocity
of a vehicle navigating in 3D space, respectively, and $\overrightarrow{\mathtt{g}}$
is a gravity vector. Also, $\Omega\in\mathbb{R}^{3}$ denotes the
vehicle's true angular velocity, $V\in\mathbb{R}^{3}$ denotes the
vehicle's true linear velocity, and $a\in\mathbb{R}^{3}$ denotes
the apparent acceleration comprised of all non-gravitational forces
affecting the vehicle. Note that $R,\Omega,a\in\{\mathcal{B}\}$ and
$P,V\in\{\mathcal{I}\}$. Express \eqref{eq:NAV_Detailed_True_dot}
more compactly as
\begin{align}
	\dot{X}= & XU-\mathcal{\mathcal{G}}X\label{eq:NAV_True_dot}\\
	= & \left[\begin{array}{ccc}
		R & P & V\\
		0_{1\times3} & 1 & 0\\
		0_{1\times3} & 0 & 1
	\end{array}\right]\left[\begin{array}{ccc}
		\left[\Omega\right]_{\times} & 0_{3\times1} & a\\
		0_{1\times3} & 0 & 0\\
		0_{1\times3} & 1 & 0
	\end{array}\right]\nonumber \\
	& -\left[\begin{array}{ccc}
		0_{3\times3} & 0_{3\times1} & -\overrightarrow{\mathtt{g}}\\
		0_{1\times3} & 0 & 0\\
		0_{1\times3} & 1 & 0
	\end{array}\right]\left[\begin{array}{ccc}
		R & P & V\\
		0_{1\times3} & 1 & 0\\
		0_{1\times3} & 0 & 1
	\end{array}\right]\nonumber 
\end{align}
where $X\in\mathbb{SE}_{2}\left(3\right)$, $U=u([\Omega\text{\ensuremath{]_{\times}}},0_{3\times1},a,1)\in\mathcal{U}_{m}$,
and $\mathcal{\mathcal{G}}=u(0_{3\times3},0_{3\times1},-\overrightarrow{\mathtt{g}},1)\in\mathcal{U}_{m}$,
see \eqref{eq:NAV_u}. Note that $T_{X}\mathbb{SE}_{2}\left(3\right)\in\mathbb{R}^{5\times5}$
denotes the tangent space of $\mathbb{SE}_{2}\left(3\right)$ at point
$X$ where $\dot{X}:\mathbb{SE}_{2}\left(3\right)\times\mathcal{U}_{m}\rightarrow T_{X}\mathbb{SE}_{2}\left(3\right)$.
When low-cost sensors are used in a GPS-denied environment, all the
components of the navigation matrix $X$ become unknown. Figure \ref{fig:NAVIGATION}
presents an ample illustration of the navigation problem.

\begin{figure*}
	\centering{}\includegraphics[scale=0.45]{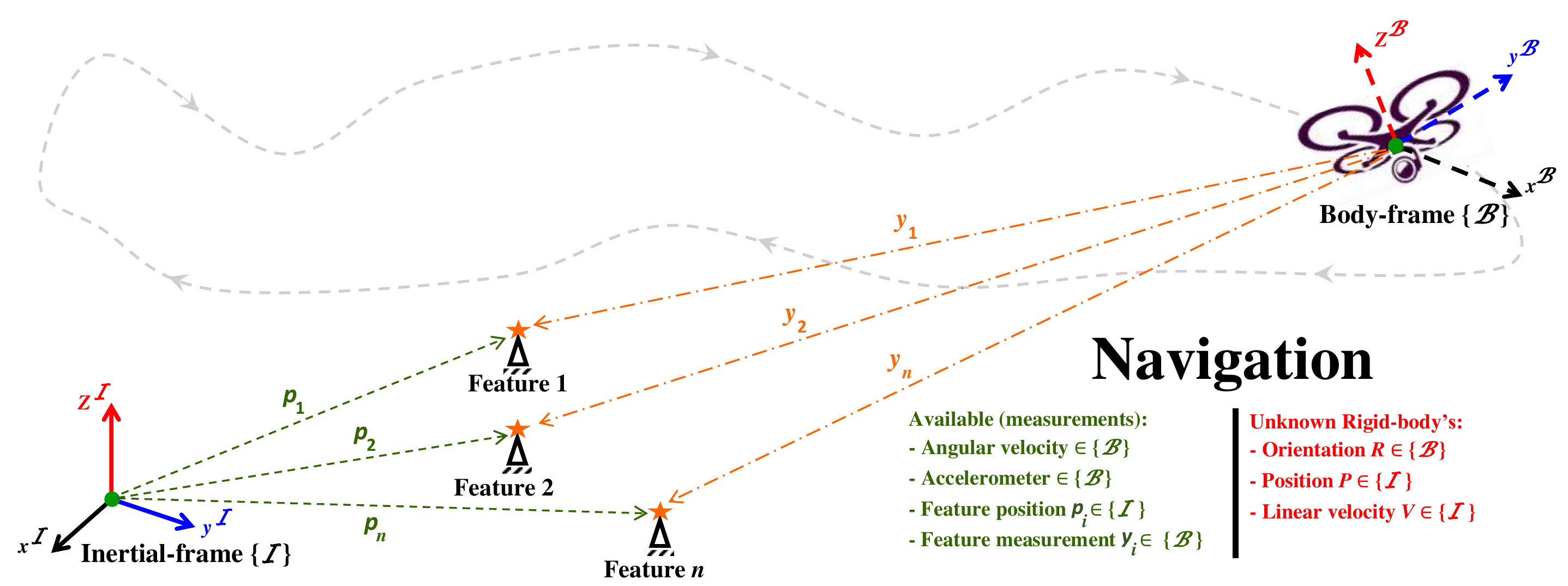}\caption{Navigation estimation task.}
	\label{fig:NAVIGATION}
\end{figure*}

The estimation can be performed using a set of sensor measurements.
The measurements of $\Omega$ and $a$ are as follows: 
\begin{equation}
\begin{cases}
\Omega_{m} & =\Omega+n_{\Omega}\in\mathbb{R}^{3}\\
a_{m} & =a+n_{a}\in\mathbb{R}^{3}
\end{cases}\label{eq:NAV_XVelcoity}
\end{equation}
where $n_{\Omega}$ and $n_{a}$ denote unknown bounded zero-mean
noise contaminating $\Omega$ and $a$, respectively, and $\mathbb{E}[n_{\Omega}]=\mathbb{E}[n_{a}]=0_{3\times1}$.
Measurements of $\Omega$ and $a$ can be easily obtained by a low-cost
IMU module. Due to the fact that the derivative of a Gaussian process
results in a Gaussian process, $n_{\Omega}=\mathcal{Q}d\beta_{\Omega}/dt$
and $n_{a}=\mathcal{Q}d\beta_{a}/dt$ as defined relative to the Brownian
motion process vector \cite{jazwinski2007stochastic} where $\mathcal{Q}\in\mathbb{R}^{3\times3}$
is an unknown positive time-variant diagonal matrix. The covariance
of $n_{\Omega}$ and $n_{a}$ is $\mathcal{Q}^{2}=\mathcal{Q}\mathcal{Q}^{\top}$.
Note that the Brownian motion process is characterized by \cite{jazwinski2007stochastic,deng2001stabilization,ito1984lectures,hashim2019SO3Wiley}:
\begin{align}
	\mathbb{P}\{\beta_{\Omega}(0)=\beta_{a}(0)=0\} & =1\nonumber \\
	\mathbb{E}[d\beta_{\Omega}/dt]=\mathbb{E}[d\beta_{a}/dt] & =0_{3\times1}\nonumber \\
	\mathbb{E}[\beta_{\Omega}]=\mathbb{E}[\beta_{a}] & =0_{3\times1}\label{eq:NAV_STC_Brownian}
\end{align}
Thus, in the light of \eqref{eq:NAV_Ecul_Dist}, the navigation dynamics
in \eqref{eq:NAV_Detailed_True_dot} can be expressed as a stochastic
differential equation:
\begin{equation}
\begin{cases}
d||R||_{{\rm I}} & =(1/2)\mathbf{vex}(\boldsymbol{\mathcal{P}}_{a}(R))^{\top}(\Omega_{m}dt-\mathcal{Q}d\beta_{\Omega})\\
dP & =Vdt\\
dV & =(Ra_{m}+\overrightarrow{\mathtt{g}})dt-R\mathcal{Q}d\beta_{a}
\end{cases}\label{eq:NAV_Detailed_True_STCH_dot}
\end{equation}
where $dR=R[\Omega_{m}]_{\times}dt-R[\mathcal{Q}d\beta_{\Omega}]_{\times}$
and ${\rm Tr}\{R[\Omega_{m}]_{\times}\}={\rm Tr}\{\boldsymbol{\mathcal{P}}_{a}(R)[\Omega_{m}]_{\times}\}=-2\mathbf{vex}(\boldsymbol{\mathcal{P}}_{a}(R))^{\top}\Omega_{m}$,
visit \cite{hashim2019SO3Wiley}. To achieve adaptive stabilization,
define
\begin{equation}
\sigma=[\sup_{t\geq0}\mathcal{Q}_{(1,1)},\sup_{t\geq0}\mathcal{Q}_{(2,2)},\sup_{t\geq0}\mathcal{Q}_{(3,3)}]^{\top}\label{eq:NAV_s}
\end{equation}
Let $n$ features be observed in the inertial-frame $p_{i}\in\left\{ \mathcal{I}\right\} $
and measured in the body-frame such that
\begin{align}
	\overline{y}_{i} & =X^{-1}\overline{p}_{i}+[(n_{i}^{y})^{\top},0,0]^{\top}\in\mathbb{R}^{5}\nonumber \\
	y_{i} & =R^{\top}(p_{i}-P)+n_{i}^{y}\in\mathbb{R}^{3}\label{eq:NAV_Vec_Landmark}
\end{align}
with $X^{-1}=\left[\begin{array}{ccc}
R^{\top} & -R^{\top}P & -R^{\top}V\\
0_{1\times3} & 1 & 0\\
0_{1\times3} & 0 & 1
\end{array}\right]$, $\overline{y}_{i}=[y_{i}^{\top},1,0]^{\top}$, $\overline{p}_{i}=[p_{i}^{\top},1,0]^{\top}$,
and $n_{i}^{y}$ being the noise associated with $y_{i}\in\left\{ \mathcal{B}\right\} $. 

\begin{assum}\label{Assum:NAV_1Landmark}The number of noncollinear
	features available for measurement is greater than or equal to three
	in order to define a plane.\end{assum}

\begin{defn}
	\label{def:NAV_SGUUB}\cite{ji2006adaptive} Consider the stochastic
	differential system in \eqref{eq:NAV_Detailed_True_STCH_dot} with
	$x=[||R||_{{\rm I}},P^{\top},V^{\top}]^{\top}$. $x(t)$ is almost
	semi-globally uniformly ultimately bounded if for a known set $\Sigma\in\mathbb{R}^{7}$
	and $x(t_{0})$, there exist a constant $\kappa>0$ and a time constant
	$\tau=\tau(\kappa,x(t_{0}))$ with $\mathbb{E}[||x(t_{0})||]<\kappa,\forall t>t_{0}+\tau$. 
\end{defn}
\begin{defn}
	\label{def:NAV_LV} Translate the stochastic dynamics in \eqref{eq:NAV_Detailed_True_STCH_dot}
	into a vector form such that 
	\[
	dx=fdt+h\overline{\mathcal{Q}}d\beta
	\]
	with $x=[||R||_{{\rm I}},P^{\top},V^{\top}]^{\top}\in\mathbb{R}^{7}$,
	$f=[(1/2)\mathbf{vex}(\boldsymbol{\mathcal{P}}_{a}(R))^{\top}\Omega_{m},V^{\top},(Ra_{m}+\overrightarrow{\mathtt{g}})^{\top}]^{\top}\in\mathbb{R}^{7}$,
	$h\in\mathbb{R}^{7\times9}$, $\overline{\mathcal{Q}}={\rm diag}(\mathcal{Q},\mathcal{Q},\mathcal{Q})\in\mathbb{R}^{9\times9}$,
	and $\beta=[\beta_{\Omega}^{\top},0_{3\times1}^{\top},\beta_{a}^{\top}]^{\top}\in\mathbb{R}^{9}$.
	Let $\mathbb{V}(x)$ be a twice differentiable function $\mathbb{V}(x)\in\mathcal{C}^{2}$.
	The differential operator $\mathcal{L}\mathbb{V}(x)$ is defined as
	follows: 
	\[
	\mathcal{L}\mathbb{V}(x)=\mathbb{V}_{x}^{\top}f+\frac{1}{2}{\rm Tr}\{h\overline{\mathcal{Q}}^{2}h^{\top}\mathbb{V}_{xx}\}
	\]
	where $\mathbb{V}_{x}=\partial\mathbb{V}/\partial x$ and $\mathbb{V}_{xx}=\partial^{2}\mathbb{V}/\partial x^{2}$.
\end{defn}
\begin{lem}
	\label{Lemm:NAV_deng}\cite{deng2001stabilization} Let the stochastic
	system in \eqref{eq:NAV_Detailed_True_STCH_dot} with $x=[||R||_{{\rm I}},P^{\top},V^{\top}]^{\top}$
	be expressed as $dx=fdt+h\overline{\mathcal{Q}}d\beta$ with $f\in\mathbb{R}^{7}$,
	$h\in\mathbb{R}^{7\times9}$, $\overline{\mathcal{Q}}\in\mathbb{R}^{9\times9}$,
	and $\beta=[\beta_{\Omega}^{\top},0_{3\times1}^{\top},\beta_{a}^{\top}]^{\top}\in\mathbb{R}^{9}$.
	Consider a twice differentiable potential function $\mathbb{V}(x)$
	where $\mathbb{V}:\mathbb{R}^{7}\rightarrow\mathbb{R}_{+}$. Let $\alpha_{1}(\cdot)$
	and $\alpha_{2}(\cdot)$ be class $\mathcal{K}_{\infty}$ functions,
	and let constants $\eta_{1}>0$ and $\eta_{2}\geq0$ such that
	\begin{align}
		& \hspace{1em}\alpha_{1}(x)\leq\mathbb{V}(x)\leq\alpha_{2}(x)\label{eq:NAV_Vfunction_Lyap}\\
		\mathcal{L}\mathbb{V}(x)= & (\partial\mathbb{V}/\partial x)^{\top}f+\frac{1}{2}{\rm Tr}\{h\overline{\mathcal{Q}}^{2}h^{\top}(\partial^{2}\mathbb{V}/\partial x^{2})\}\nonumber \\
		\leq & -\eta_{1}\mathbb{V}(x)+\eta_{2}\label{eq:NAV_dVfunction_Lyap}
	\end{align}
	Then for $x\in\mathbb{R}^{7}$, the stochastic dynamics in \eqref{eq:NAV_Detailed_True_STCH_dot}
	have almost a unique strong solution on $[0,\infty)$. Also, the solution
	$x$ is bounded in probability where
	\begin{equation}
	\mathbb{E}[\mathbb{V}(x)]\leq\mathbb{V}(x(0)){\rm exp}(-\eta_{1}t)+\eta_{2}/\eta_{1}\label{eq:NAV_EVfunction_Lyap}
	\end{equation}
	In addition, the inequality in \eqref{eq:NAV_EVfunction_Lyap} indicates
	that $x$ is semi-globally uniformly ultimately bounded%
	. 
\end{lem}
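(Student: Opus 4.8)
The plan is to apply It\^{o}'s formula to the scalar process $\mathbb{V}(x(t))$ along the vector dynamics of Definition~\ref{def:NAV_LV}, and then to the exponentially weighted process $e^{\eta_{1}t}\mathbb{V}(x(t))$, so that the infinitesimal decay condition~\eqref{eq:NAV_dVfunction_Lyap} is converted into the integral bound~\eqref{eq:NAV_EVfunction_Lyap} through a Gr\"{o}nwall-type argument.

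First I would confirm that the operator $\mathcal{L}\mathbb{V}$ introduced in Definition~\ref{def:NAV_LV} is precisely the drift of $\mathbb{V}(x(t))$. Applying It\^{o}'s formula to $\mathbb{V}(x)$ along $dx=f\,dt+h\overline{\mathcal{Q}}\,d\beta$ gives
\[
d\mathbb{V}(x)=\mathcal{L}\mathbb{V}(x)\,dt+\mathbb{V}_{x}^{\top}h\overline{\mathcal{Q}}\,d\beta,
\]
where the quadratic covariation $d\beta\,d\beta^{\top}=\mathbf{I}_{9}\,dt$ produces the trace term $\frac{1}{2}{\rm Tr}\{h\overline{\mathcal{Q}}^{2}h^{\top}\mathbb{V}_{xx}\}$ and the final term is a local martingale.

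The central step is to set $W(t)=e^{\eta_{1}t}\mathbb{V}(x(t))$ and apply the product rule together with the display above, so that the drift of $W$ equals $e^{\eta_{1}t}[\eta_{1}\mathbb{V}(x)+\mathcal{L}\mathbb{V}(x)]$. Substituting~\eqref{eq:NAV_dVfunction_Lyap} bounds this drift above by $e^{\eta_{1}t}\eta_{2}$. Integrating over $[0,t]$ and taking expectations---once the local-martingale contribution is removed---yields $\mathbb{E}[e^{\eta_{1}t}\mathbb{V}(x(t))]\leq\mathbb{V}(x(0))+(\eta_{2}/\eta_{1})(e^{\eta_{1}t}-1)$, which rearranges immediately into~\eqref{eq:NAV_EVfunction_Lyap}. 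For existence and uniqueness of a strong solution on $[0,\infty)$, I would invoke the standard non-explosion criterion: the radial unboundedness $\alpha_{1}(x)\leq\mathbb{V}(x)$ in~\eqref{eq:NAV_Vfunction_Lyap}, combined with the supermartingale-type estimate just derived, forbids finite-time blow-up, so that the first-exit times of $\|x\|$ from balls of radius $n$ diverge almost surely as $n\to\infty$. Semi-global uniform ultimate boundedness then follows by letting $t\to\infty$ in~\eqref{eq:NAV_EVfunction_Lyap}: since $\mathbb{E}[\mathbb{V}(x)]\to\eta_{2}/\eta_{1}$ and $\alpha_{1}(\|x\|)\leq\mathbb{V}(x)$, the state is ultimately confined in mean square to the set characterized by $\alpha_{1}^{-1}(\eta_{2}/\eta_{1})$, matching Definition~\ref{def:NAV_SGUUB}.

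The main obstacle I anticipate is the rigorous justification that the It\^{o}-integral term $\int_{0}^{t}e^{\eta_{1}s}\mathbb{V}_{x}^{\top}h\overline{\mathcal{Q}}\,d\beta$ has vanishing expectation. Without a priori square-integrability of the integrand $\mathbb{V}_{x}^{\top}h\overline{\mathcal{Q}}$, this stochastic integral is only a local martingale and need not have zero mean. The remedy is localization: introduce the stopping times $\tau_{n}=\inf\{t\geq0:\|x(t)\|\geq n\}$, derive the bound for the stopped process $\mathbb{V}(x(t\wedge\tau_{n}))$---on which the integrand is bounded and the integral is a genuine martingale---and then pass to the limit $n\to\infty$ using the non-explosion property together with Fatou's lemma to recover~\eqref{eq:NAV_EVfunction_Lyap} for the original process.
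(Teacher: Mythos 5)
The paper itself contains no proof of this lemma: it is imported verbatim by citation from \cite{deng2001stabilization}, so the only meaningful comparison is with the argument in that reference. Your proposal --- It\^{o}'s formula for $\mathbb{V}(x(t))$, the exponentially weighted process $e^{\eta_{1}t}\mathbb{V}(x(t))$ whose drift is bounded by $e^{\eta_{1}t}\eta_{2}$ via \eqref{eq:NAV_dVfunction_Lyap}, localization by the exit times $\tau_{n}$ so that the stochastic integral is a genuine martingale, and the passage to the limit by Fatou together with the Khasminskii-type non-explosion estimate furnished by the radial unboundedness $\alpha_{1}$ in \eqref{eq:NAV_Vfunction_Lyap} --- is correct and is essentially the standard proof given in that cited source.
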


\subsection{Estimates, Error, and Measurements Setup\label{subsec:Navigation-Matrix}}

Define the estimate of $X\in\mathbb{SE}_{2}\left(3\right)$ in \eqref{eq:NAV_X}
as
\begin{equation}
\hat{X}=\Psi(\hat{R},\hat{P},\hat{V})=\left[\begin{array}{ccc}
\hat{R} & \hat{P} & \hat{V}\\
0_{1\times3} & 1 & 0\\
0_{1\times3} & 0 & 1
\end{array}\right]\in\mathbb{SE}_{2}\left(3\right)\label{eq:NAV_X_est}
\end{equation}
with $\hat{R}\in\mathbb{SO}\left(3\right)$, $\hat{P}\in\mathbb{R}^{3}$,
and $\hat{V}\in\mathbb{R}^{3}$ being the estimates of $R$, $P$,
and $V$, respectively. Let the estimation error between $X$ and
$\hat{X}$ be
\begin{align*}
	\tilde{X}=X\hat{X}^{-1} & =\left[\begin{array}{ccc}
		\tilde{R} & \tilde{P} & \tilde{V}\\
		0_{1\times3} & 1 & 0\\
		0_{1\times3} & 0 & 1
	\end{array}\right]\in\mathbb{SE}_{2}\left(3\right)
\end{align*}
where $\hat{X}^{-1}=\Psi(\hat{R}^{\top},-\hat{R}^{\top}\hat{P},-\hat{R}^{\top}\hat{V})$,
$\tilde{R}=R\hat{R}^{\top}$, $\tilde{P}=P-\tilde{R}\hat{P}$, and
$\tilde{V}=V-\tilde{R}\hat{V}$. The navigation estimation problem
aims to drive $X\rightarrow\hat{X}$ such that $\tilde{X}\rightarrow\mathbf{I}_{5}$,
$\tilde{R}\rightarrow\mathbf{I}_{3}$, $\tilde{P}\rightarrow0_{3\times1}$,
and $\tilde{V}\rightarrow0_{3\times1}$. Define the error
\begin{align}
	\overset{\circ}{\tilde{y}}_{i} & =\overline{p}_{i}-\tilde{X}^{-1}\overline{p}_{i}=\overline{p}_{i}-\hat{X}\overline{y}_{i}\nonumber \\
	& =[\underbrace{(p_{i}-\hat{R}y_{i}-\hat{P})^{\top}}_{\tilde{p}_{i}-\tilde{P}},0,0]^{\top}\label{eq:NAV_y_error}
\end{align}
where $\overset{\circ}{\tilde{y}}_{i}=[\tilde{y}_{i}^{\top},0,0]^{\top}$,
$\tilde{p}_{i}=\hat{p}_{i}-\tilde{R}p_{i}$, and $\tilde{P}=\hat{P}-\tilde{R}P$.
Let $s_{i}>0$ refer to the sensor confidence level of the $i$th
feature. Define the following components with respect to the available
vector measurements: $s_{T}=\sum_{i=1}^{n}s_{i}$, $p_{c}=\frac{1}{s_{T}}\sum_{i=1}^{n}s_{i}p_{i}$,
$M=\sum_{i=1}^{n}s_{i}(p_{i}-p_{c})(p_{i}-p_{c})^{\top}$ such that
$M=\sum_{i=1}^{n}s_{i}p_{i}p_{i}^{\top}-s_{T}p_{c}p_{c}^{\top}$,
$\sum_{i=1}^{n}s_{i}(p_{i}-p_{c})y_{i}^{\top}\hat{R}^{\top}=\sum_{i=1}^{n}s_{i}(p_{i}-p_{c})(p_{i}-P)^{\top}\tilde{R}=M\tilde{R}$,
and $\sum_{i=1}^{n}s_{i}\tilde{y}_{i}=\sum_{i=1}^{n}s_{i}(p_{i}-\hat{R}y_{i}-\hat{P})=s_{T}\tilde{R}^{\top}\tilde{P}_{\varepsilon}$
with $\tilde{P}_{\varepsilon}=\tilde{P}-(\mathbf{I}_{3}-\tilde{R})p_{c}$.
It can be deduced that $\tilde{R}\rightarrow\mathbf{I}_{3}$ indicates
that $\tilde{P}_{\varepsilon}\rightarrow\tilde{P}$. Let us sum up
the vector measurements:
\begin{equation}
\begin{cases}
p_{c} & =\frac{1}{s_{T}}\sum_{i=1}^{n}s_{i}p_{i},\hspace{1em}s_{T}=\sum_{i=1}^{n}s_{i}\\
M & =\sum_{i=1}^{n}s_{i}p_{i}p_{i}^{\top}-s_{T}p_{c}p_{c}^{\top}\\
M\tilde{R} & =\sum_{i=1}^{n}s_{i}(p_{i}-p_{c})y_{i}^{\top}\hat{R}^{\top}\\
\tilde{R}^{\top}\tilde{P}_{\varepsilon} & =\frac{1}{s_{T}}\sum_{i=1}^{n}s_{i}(p_{i}-\hat{R}y_{i}-\hat{P})
\end{cases}\label{eq:NAV_Set_Measurements}
\end{equation}

The vector measurements in \eqref{eq:NAV_Set_Measurements} will be
used to facilitate an explicit observer implementation.
\begin{lem}
	\label{Lemm:SLAM_Lemma1}Consider $\tilde{R}\in\mathbb{SO}\left(3\right)$
	and $M=M^{\top}\in\mathbb{R}^{3\times3}$ as in \eqref{eq:NAV_Set_Measurements}.
	Let $\overline{\mathbf{M}}={\rm Tr}\{M\}\mathbf{I}_{3}-M$ where $\underline{\lambda}_{\overline{\mathbf{M}}}$
	and $\overline{\lambda}_{\overline{\mathbf{M}}}$ are the minimum
	and the maximum eigenvalues of $\overline{\mathbf{M}}$, respectively.
	For $||M\tilde{R}||_{{\rm I}}=\frac{1}{4}{\rm Tr}\{M(\mathbf{I}_{3}-\tilde{R})\}$
	and $\boldsymbol{\Upsilon}(M\tilde{R})=\mathbf{vex}(\boldsymbol{\mathcal{P}}_{a}(M\tilde{R}))$,
	\begin{align}
		\frac{\underline{\lambda}_{\overline{\mathbf{M}}}}{2}(1+{\rm Tr}\{\tilde{R}\})||M\tilde{R}||_{{\rm I}} & \leq||\boldsymbol{\Upsilon}(M\tilde{R})||^{2}\label{eq:SLAM_lemm1_1}\\
		2\overline{\lambda}_{\overline{\mathbf{M}}}||M\tilde{R}||_{{\rm I}} & \geq||\boldsymbol{\Upsilon}(M\tilde{R})||^{2}\label{eq:SLAM_lemm1_2}
	\end{align}
	\begin{proof}See (\cite{hashim2019SO3Wiley}, Lemma 1).\end{proof}
\end{lem}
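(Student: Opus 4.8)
The plan is to reduce both inequalities to elementary scalar estimates by parameterizing $\tilde{R}$ through its angle--axis representation. Since, by the construction in \eqref{eq:NAV_Set_Measurements}, $M=\sum_{i}s_{i}(p_{i}-p_{c})(p_{i}-p_{c})^{\top}$ is symmetric positive semi-definite, the matrix $\overline{\mathbf{M}}={\rm Tr}\{M\}\mathbf{I}_{3}-M$ is also symmetric positive semi-definite; hence $0\leq\underline{\lambda}_{\overline{\mathbf{M}}}\leq\overline{\lambda}_{\overline{\mathbf{M}}}$ and $\underline{\lambda}_{\overline{\mathbf{M}}}\leq u^{\top}\overline{\mathbf{M}}u\leq\overline{\lambda}_{\overline{\mathbf{M}}}$ for any unit vector $u$. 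First I would write $\tilde{R}=\mathbf{I}_{3}+\sin\theta[u]_{\times}+(1-\cos\theta)[u]_{\times}^{2}$ with $u$ the unit rotation axis and $\theta$ the rotation angle, so that ${\rm Tr}\{\tilde{R}\}=1+2\cos\theta$ and $1+{\rm Tr}\{\tilde{R}\}=2(1+\cos\theta)$.

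The next step is to obtain closed forms for the two quantities being compared. Using $[u]_{\times}^{2}=uu^{\top}-\mathbf{I}_{3}$ together with ${\rm Tr}\{M[u]_{\times}\}=0$ (the trace of a symmetric times a skew-symmetric matrix vanishes) and ${\rm Tr}\{M[u]_{\times}^{2}\}=u^{\top}Mu-{\rm Tr}\{M\}$, a direct computation gives $\|M\tilde{R}\|_{{\rm I}}=\tfrac{1}{4}{\rm Tr}\{M(\mathbf{I}_{3}-\tilde{R})\}=\tfrac{1}{4}(1-\cos\theta)\,u^{\top}\overline{\mathbf{M}}u$. For $\boldsymbol{\Upsilon}(M\tilde{R})$ the key algebraic identity I would establish is that, for symmetric $M$, one has $M[u]_{\times}+[u]_{\times}M=[\overline{\mathbf{M}}u]_{\times}$, and that the anti-symmetric part of $M[u]_{\times}^{2}-[u]_{\times}^{2}M$ has $\mathbf{vex}$ equal to $u\times(Mu)$. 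Combining these yields $\boldsymbol{\Upsilon}(M\tilde{R})=\tfrac{1}{2}\big(\sin\theta\,\overline{\mathbf{M}}u+(1-\cos\theta)\,u\times(Mu)\big)$.

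The crux --- and the step I expect to be the main obstacle --- is to expand $\|\boldsymbol{\Upsilon}(M\tilde{R})\|^{2}$ without cross terms. Here I would decompose $\overline{\mathbf{M}}u=\alpha u+w$ with $\alpha=u^{\top}\overline{\mathbf{M}}u$ and $w\perp u$, and observe that $u\times(Mu)=-[u]_{\times}\overline{\mathbf{M}}u=-[u]_{\times}w$, so that $u\times(Mu)$ is orthogonal to both $u$ and $w$ with $\|u\times(Mu)\|=\|w\|$. Since $u$, $w$, and $u\times(Mu)$ are mutually orthogonal, and $\sin^{2}\theta+(1-\cos\theta)^{2}=2(1-\cos\theta)$, the square collapses to $\|\boldsymbol{\Upsilon}(M\tilde{R})\|^{2}=\tfrac{1}{4}\big(\sin^{2}\theta\,\alpha^{2}+2(1-\cos\theta)\|w\|^{2}\big)$. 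This orthogonal bookkeeping is what makes the remaining estimates tight; getting it right is where the real work lies.

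With both sides in closed form the two bounds follow from elementary eigenvalue estimates. Substituting $1+{\rm Tr}\{\tilde{R}\}=2(1+\cos\theta)$ and the closed form for $\|M\tilde{R}\|_{{\rm I}}$, the lower bound \eqref{eq:SLAM_lemm1_1} reduces to $\tfrac{1}{4}\underline{\lambda}_{\overline{\mathbf{M}}}\sin^{2}\theta\,\alpha\leq\tfrac{1}{4}\big(\sin^{2}\theta\,\alpha^{2}+2(1-\cos\theta)\|w\|^{2}\big)$, which holds after discarding the nonnegative $\|w\|^{2}$ term and using $\alpha\geq\underline{\lambda}_{\overline{\mathbf{M}}}\geq0$. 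For the upper bound \eqref{eq:SLAM_lemm1_2}, after cancelling the common factor $(1-\cos\theta)$ (the case $\theta=0$ being the trivial $0\leq0$) and writing $\|w\|^{2}=\|\overline{\mathbf{M}}u\|^{2}-\alpha^{2}$, the claim reduces to $(\cos\theta-1)\alpha^{2}+2\|\overline{\mathbf{M}}u\|^{2}\leq2\overline{\lambda}_{\overline{\mathbf{M}}}\alpha$; since $(\cos\theta-1)\alpha^{2}\leq0$ it suffices to prove $\|\overline{\mathbf{M}}u\|^{2}\leq\overline{\lambda}_{\overline{\mathbf{M}}}\alpha$, which follows eigenvalue-by-eigenvalue from $\mu_{i}\leq\overline{\lambda}_{\overline{\mathbf{M}}}$ in the spectral basis of $\overline{\mathbf{M}}$. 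This reproduces (\cite{hashim2019SO3Wiley}, Lemma 1).
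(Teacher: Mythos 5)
Your proof is correct, and it is worth noting at the outset that the paper itself offers no argument for this lemma at all: its ``proof'' is a single citation to Lemma~1 of \cite{hashim2019SO3Wiley}. What you have produced is a self-contained reconstruction of exactly the kind of angle--axis argument that the cited reference uses, so your route is the standard one rather than a genuinely new one --- but unlike the paper, you actually carry it out. I verified the key steps: the Rodrigues expansion gives $\mathrm{Tr}\{\tilde{R}\}=1+2\cos\theta$ and, since $\mathrm{Tr}\{M[u]_{\times}\}=0$ and $\mathrm{Tr}\{M[u]_{\times}^{2}\}=u^{\top}Mu-\mathrm{Tr}\{M\}$, indeed $\|M\tilde{R}\|_{\rm I}=\tfrac{1}{4}(1-\cos\theta)\,u^{\top}\overline{\mathbf{M}}u$; the identity $M[u]_{\times}+[u]_{\times}M=[\overline{\mathbf{M}}u]_{\times}$ for symmetric $M$ is correct, as is $\mathcal{P}_{a}(M[u]_{\times}^{2})=\tfrac{1}{2}(Muu^{\top}-uu^{\top}M)=\tfrac{1}{2}[u\times(Mu)]_{\times}$, yielding your closed form for $\boldsymbol{\Upsilon}(M\tilde{R})$; the orthogonality of $u$, $w$, and $u\times(Mu)=-[u]_{\times}w$ makes the cross terms vanish, and $\sin^{2}\theta+(1-\cos\theta)^{2}=2(1-\cos\theta)$ gives the clean expression $\|\boldsymbol{\Upsilon}(M\tilde{R})\|^{2}=\tfrac{1}{4}\bigl(\sin^{2}\theta\,\alpha^{2}+2(1-\cos\theta)\|w\|^{2}\bigr)$. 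Both final reductions are sound: the lower bound follows from the Rayleigh bound $\alpha\geq\underline{\lambda}_{\overline{\mathbf{M}}}\geq0$ after discarding the $\|w\|^{2}$ term, and the upper bound from $u^{\top}\overline{\mathbf{M}}^{2}u\leq\overline{\lambda}_{\overline{\mathbf{M}}}\,u^{\top}\overline{\mathbf{M}}u$, which holds eigenvalue-by-eigenvalue precisely because $\overline{\mathbf{M}}$ is positive semi-definite --- a fact you correctly secured at the start from $M=\sum_{i}s_{i}(p_{i}-p_{c})(p_{i}-p_{c})^{\top}\succeq0$ (the paper's Assumption~1 even gives $\underline{\lambda}_{\overline{\mathbf{M}}}>0$, but semi-definiteness suffices for your argument). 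In short: the proposal supplies, correctly and completely, the proof that the paper delegates to an external reference.
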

Let $\lambda(M)=\{\lambda_{1},\lambda_{2},\lambda_{3}\}$ where $\lambda_{3}\geq\lambda_{2}\geq\lambda_{1}$
and $\overline{\mathbf{M}}={\rm Tr}\{M\}\mathbf{I}_{3}-M$. Based
on Assumption \ref{Assum:NAV_1Landmark}, a minimum of two eigenvalues
of $\lambda(M)$ are greater than zero. Hence, (\cite{bullo2004geometric}
page. 553): 1) $\overline{\mathbf{M}}$ is positive-definite, and
2) $\lambda(\overline{\mathbf{M}})=\{\lambda_{3}+\lambda_{2},\lambda_{3}+\lambda_{1},\lambda_{2}+\lambda_{1}\}$
with $\underline{\lambda}_{\overline{\mathbf{M}}}=\lambda_{2}+\lambda_{1}>0$.
\begin{defn}
	\label{def:Unstable-set}\cite{hashim2019SO3Wiley} Define the unstable
	set $\mathbb{U}_{s}\subset\mathbb{SO}\left(3\right)$ as
	\begin{equation}
	\mathbb{U}_{s}=\left\{ \left.\tilde{R}(0)\in\mathbb{SO}\left(3\right)\right|{\rm Tr}\{\tilde{R}(0)\}=-1\right\} \label{eq:Unstable_set_SO3}
	\end{equation}
	where $\tilde{R}(0)\in\mathbb{U}_{s}$ if: $\tilde{R}(0)={\rm diag}(1,-1,-1)$,
	$\tilde{R}(0)={\rm diag}(-1,1,-1)$, or $\tilde{R}(0)={\rm diag}(-1,-1,1)$.
	Based on \eqref{eq:NAV_Ecul_Dist}, $\tilde{R}(0)\in\mathbb{U}_{s}$
	indicates that $||\tilde{R}(0)||_{{\rm I}}=+1$.
\end{defn}

\section{Guaranteed Performance\label{sec:Guaranteed-Performance}}

Define the error vector as follows:
\begin{equation}
e=[e_{1},e_{2},e_{3},e_{4}]^{\top}=\left[||M\tilde{R}||_{{\rm I}},\tilde{P}_{\varepsilon}^{\top}\tilde{R}\right]^{\top}\in\mathbb{R}^{4}\label{eq:NAV_Vec_error}
\end{equation}
This section demonstrates how to guide the error $e$ to reduce systematically
and smoothly from a known large set to a known small set following
the guaranteed measures of transient and steady-state performance.
Hence, define the following prescribed performance function (PPF)
\cite{bechlioulis2008robust}:
\begin{equation}
\xi_{i}(t)=(\xi_{i}^{0}-\xi_{i}^{\infty})\exp(-\ell_{i}t)+\xi_{i}^{\infty},\hspace{1em}\forall i=1,2,\ldots,4\label{eq:NAV_Presc}
\end{equation}
where $\xi_{i}:\mathbb{R}_{+}\rightarrow\mathbb{R}_{+}$, $\xi_{i}(0)=\xi_{i}^{0}\in\mathbb{R}_{+}$
is the upper bound of the known large set, $\xi_{i}^{\infty}\in\mathbb{R}_{+}$
is the upper bound of the small set with $\lim\limits _{t\to\infty}\xi_{i}(t)=\xi_{i}^{\infty}$,
and $\ell_{i}>0$ is the convergence rate of $\xi_{i}\left(t\right)$
from $\xi_{i}^{0}$ to $\xi_{i}^{\infty}$ for all $i=1,2,3,4$. Define
$e_{i}:=e_{i}(t)$, $\xi_{i}:=\xi_{i}(t)$, $\xi=[\xi_{1},\ldots,\xi_{4}]^{\top}$,
$\xi^{0}=[\xi_{1}^{0},\ldots,\xi_{4}^{0}]^{\top}$, $\xi^{\infty}=[\xi_{1}^{\infty},\ldots,\xi_{4}^{\infty}]^{\top}$,
and $\ell=[\ell_{1},\ldots,\ell_{4}]^{\top}$ for all $\xi,\xi^{0},\xi^{\infty},\ell\in\mathbb{R}^{4}$.
$e_{i}$ follows the PPF in \eqref{eq:NAV_Presc}:
\begin{align}
	-\delta\xi_{i}<e_{i}<\xi_{i}, & \text{ if }e_{i}\left(0\right)\geq0\label{eq:NAV_ePos}\\
	-\xi_{i}<e_{i}<\delta\xi_{i}, & \text{ if }e_{i}\left(0\right)<0\label{eq:NAV_eNeg}
\end{align}
where $\delta\in[0,1]$. Figure \ref{fig:NAVPPF_2} illustrates the
concept of PPF for the two scenarios presented in \eqref{eq:NAV_ePos}
and \eqref{eq:NAV_eNeg}.

\begin{figure}[H]
	\centering{}\includegraphics[scale=0.3]{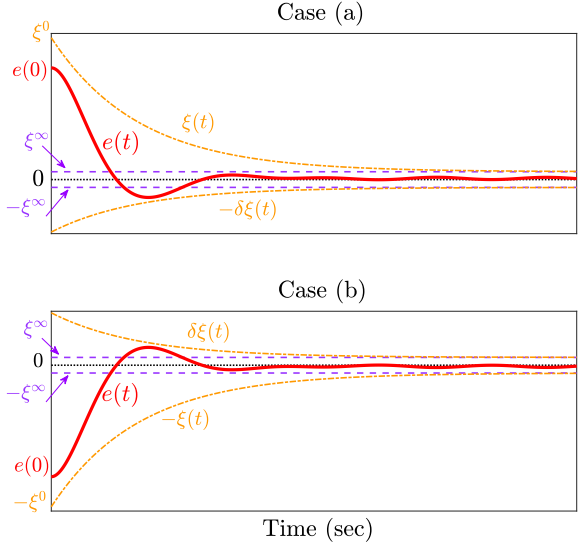} \caption{Guaranteed performance of $e_{i}$ in view of Case (a) Eq. \eqref{eq:NAV_ePos};
		and Case (b) Eq. \eqref{eq:NAV_eNeg}.}
	\label{fig:NAVPPF_2}
\end{figure}

Let $e_{i}$ be defined as 
\begin{equation}
e_{i}=\xi_{i}\mathcal{N}(E_{i})\label{eq:NAV_e_Trans}
\end{equation}
where $\mathcal{N}(E_{i})$ is a smooth function and $E_{i}$ is
a transformed error (unconstrained error). To extract $E_{i}$, $\mathcal{N}(E_{i})$
must follow Assumption \ref{Assum:NAV_2PPF}:

\begin{assum}\label{Assum:NAV_2PPF}\cite{bechlioulis2008robust,hashim2019SE3Det,hashim2020TITS_SLAM} 
	\begin{itemize}
		\item[1)] $\mathcal{N}(E_{i})$ is continuously differentiable and strictly
		increasing. 
		\item[2)] $\mathcal{N}(E_{i})$ is bounded such that 
		\[
		-\delta_{i}<\mathcal{N}(E_{i})<\delta_{i},\,\forall\,e_{i}\left(0\right)\in\mathbb{R}
		\]
		where $\delta_{i}$ is a positive constant. 
		\item[3)] 
		\begin{align*}
			\underset{E_{i}\rightarrow\pm\infty}{\lim}\mathcal{N}(E_{i})=\pm\delta_{i} & ,\,\forall\,e_{i}\left(0\right)\in\mathbb{R}
		\end{align*}
		where $\delta_{i}$ is a positive constant. 
	\end{itemize}
\end{assum}

The following function satisfies Assumption \ref{Assum:NAV_2PPF}:

\begin{equation}
\mathcal{N}(E_{i})=\frac{\delta_{i}\exp(E_{i})-\delta_{i}\exp(-E_{i})}{\exp(E_{i})+\exp(-E_{i})}\label{eq:NAV_Smooth}
\end{equation}
Define $\delta=[\delta_{1},\delta_{2},\delta_{3},\delta_{4}]^{\top}$,
$E=\left[E_{R},E_{P}^{\top}\right]^{\top}$ for all $\delta,E\in\mathbb{R}^{4}$
where $E_{R}=E_{1}\in\mathbb{R}$ and $E_{P}=[E_{2},E_{3},E_{4}]^{\top}\in\mathbb{R}^{3}$.
The inverse transformation of \eqref{eq:NAV_Smooth} is $E_{i}=\mathcal{N}^{-1}(e_{i}/\xi_{i})$
where
\begin{equation}
\begin{aligned}E_{i}= & \frac{1}{2}\text{ln}\frac{\delta_{i}+e_{i}/\xi_{i}}{\delta_{i}-e_{i}/\xi_{i}}\end{aligned}
\label{eq:NAV_trans2}
\end{equation}

\begin{rem}
	\label{rem:NAV_1}\cite{bechlioulis2008robust,hashim2019SE3Det,hashim2020TITS_SLAM}
	Let the unconstrained error be as in \eqref{eq:NAV_trans2}. The transient
	and steady-state performance of $e_{i}$ follow the PPF $\xi_{i}$,
	if and only if $E_{i}\in\mathcal{L}_{\infty}$.
\end{rem}
\begin{rem}
	\label{rem:NAV_2}Note that by definition $e_{1}=||M\tilde{R}||_{{\rm I}}=\frac{1}{4}{\rm Tr}\{M(\mathbf{I}_{3}-\tilde{R})\}\geq0$
	for all $\forall t\geq0$. Also, $e_{i}\leq\xi_{i}$ indicates that
	$(\delta_{i}+e_{i}/\xi_{i})/(\delta_{i}-e_{i}/\xi_{i})\neq1$ for
	all $e_{i}\neq0$, and $(\delta_{i}+e_{i}/\xi_{i})/(\delta_{i}-e_{i}/\xi_{i})=1$
	if $e_{i}=0$. Therefore, $E_{i}\neq0\forall e_{i}\neq0$ and $E_{i}=0$
	if and only if $e_{i}=0$.
\end{rem}
Define {\small{}
	\begin{equation}
	\begin{split}\Delta_{i} & =\frac{1}{2\xi_{i}}\frac{\partial\mathcal{N}^{-1}\left(e_{i}/\xi_{i}\right)}{\partial\left(e_{i}/\xi_{i}\right)}=\frac{1}{2\xi_{i}}(\frac{1}{\delta_{i}+e_{i}/\xi_{i}}+\frac{1}{\delta_{i}-e_{i}/\xi_{i}})\end{split}
	\label{eq:SE3PPF_mu}
	\end{equation}
}Therefore, the dynamics of $\dot{E}_{i}$ can be obtained as
\begin{align}
	\dot{E}_{i} & =\Delta_{i}(\frac{d}{dt}e_{i}-\mu_{i}e_{i})\label{eq:SE3PPF_Trans_dot1}
\end{align}
Consider defining $\mu_{R}=\dot{\xi}_{1}/\xi_{1}$, $\mu_{P}={\rm diag}(\dot{\xi}_{2}/\xi_{2},\dot{\xi}_{3}/\xi_{3},\dot{\xi}_{4}/\xi_{4})$,
$\Delta_{R}=\Delta_{1}$, and $\Delta_{P}={\rm diag}(\Delta_{2},\Delta_{3},\Delta_{4})$
for all $\mu_{R},\Delta_{R}\in\mathbb{R}$ and $\mu_{P},\Delta_{P}\in\mathbb{R}^{3\times3}$.

\section{Nonlinear Stochastic Navigation Filter\label{sec:Non-Nav-Observer1}}

This section pursues three main objectives: 1) providing a comprehensive
discussion of a nonlinear stochastic complementary navigation filter
evolved on $\mathbb{SE}_{2}(3)$, 2) constraining the error vector
defined in \eqref{eq:NAV_Vec_error} to behave following the predefined
transient and steady state measures specified by the user, and 3)
designing a robust filter that is able to produce accurate results
irrespective of the noise level in measurements. The proposed filter
can be utilized for unmanned aerial vehicles and ground robots. Let
$\hat{\sigma}$ denote the estimate of $\sigma$ defined in \eqref{eq:NAV_s}.
Define the error between $\sigma$ and $\hat{\sigma}$ as 
\[
\tilde{\sigma}=\sigma-\hat{\sigma}\in\mathbb{R}^{3}
\]
Based on the set of measurements in \eqref{eq:NAV_Set_Measurements},
the error in \eqref{eq:NAV_Vec_error}, and the transformed error
in \eqref{eq:NAV_trans2}, consider the following nonlinear stochastic
navigation filter evolved directly on the Lie Group of $\mathbb{SE}_{2}\left(3\right)$:

\begin{align}
	\dot{\hat{X}}= & \hat{X}U_{m}-W\hat{X}\label{eq:NAV_Filter1}\\
	= & \left[\begin{array}{ccc}
		\hat{R} & \hat{P} & \hat{V}\\
		0_{1\times3} & 1 & 0\\
		0_{1\times3} & 0 & 1
	\end{array}\right]\left[\begin{array}{ccc}
		[\Omega_{m}\text{\ensuremath{]_{\times}}} & 0_{3\times1} & a_{m}\\
		0_{1\times3} & 0 & 0\\
		0_{1\times3} & 1 & 0
	\end{array}\right]\nonumber \\
	& -\left[\begin{array}{ccc}
		[w_{\Omega}\text{\ensuremath{]_{\times}}} & w_{V} & w_{a}\\
		0_{1\times3} & 0 & 0\\
		0_{1\times3} & 1 & 0
	\end{array}\right]\left[\begin{array}{ccc}
		\hat{R} & \hat{P} & \hat{V}\\
		0_{1\times3} & 1 & 0\\
		0_{1\times3} & 0 & 1
	\end{array}\right]\nonumber 
\end{align}
where $\hat{X}\in\mathbb{SE}_{2}\left(3\right)$, $U_{m}=u([\Omega_{m}\text{\ensuremath{]_{\times}}},0_{3\times1},a_{m},1)\in\mathcal{U}_{m}$
and $W=u([w_{\Omega}\text{\ensuremath{]_{\times}}},w_{V},w_{a},1)\in\mathcal{U}_{m}$.
The equivalent detailed representation of \eqref{eq:NAV_Filter1}
is as follows: 
\begin{equation}
\begin{cases}
\dot{\hat{R}} & =\hat{R}\left[\Omega\right]_{\times}-\left[w_{\Omega}\right]_{\times}\hat{R}\\
\dot{\hat{P}} & =\hat{V}-\left[w_{\Omega}\right]_{\times}\hat{P}-w_{V}\\
\dot{\hat{V}} & =\hat{R}a+\overrightarrow{\mathtt{g}}-\left[w_{\Omega}\right]_{\times}\hat{V}-w_{a}\\
w_{\Omega} & =-k_{w}(\Delta_{R}E_{R}+1)\boldsymbol{\Upsilon}(M\tilde{R})\\
& \hspace{1em}-\frac{\Delta_{R}}{4}\frac{||M\tilde{R}||_{{\rm I}}+2}{||M\tilde{R}||_{{\rm I}}+1}\hat{R}{\rm diag}(\hat{R}^{\top}\boldsymbol{\Upsilon}(M\tilde{R}))\hat{\sigma}\\
w_{V} & =\left[p_{c}\right]_{\times}w_{\Omega}-\frac{k_{v}}{\varepsilon}\Delta_{P}E_{P}-\ell_{P}\tilde{R}^{\top}\tilde{P}_{\varepsilon}\\
w_{a} & =-\overrightarrow{\mathtt{g}}-k_{a}\left(\frac{k_{v}}{\mu}\Delta_{P}+\mathbf{I}_{3}\right)\Delta_{P}E_{P}\\
k_{R} & =\gamma_{\sigma}\frac{||M\tilde{R}||_{{\rm I}}+2}{8}\Delta_{R}^{2}\exp(E_{R})\\
\dot{\hat{\sigma}}_{\Omega} & =k_{R}{\rm diag}(\hat{R}^{\top}\boldsymbol{\Upsilon}(M\tilde{R}))\hat{R}^{\top}\boldsymbol{\Upsilon}(M\tilde{R})-k_{\sigma}\gamma_{\sigma}\hat{\sigma}
\end{cases}\label{eq:NAV_Filter1_Detailed}
\end{equation}
where $k_{w}$, $k_{v}$, $k_{a}$, $\gamma_{\sigma}$, $k_{\sigma}$,
$\mu$, and $\ell_{P}$ are positive constants, $w_{\Omega}\in\mathbb{R}^{3}$,
$w_{V}\in\mathbb{R}^{3}$, and $w_{a}\in\mathbb{R}^{3}$ are the correction
factors $\forall$ $w_{\Omega},w_{V},w_{a}\in\mathbb{R}^{3}$, $\boldsymbol{\Upsilon}(M\tilde{R})=\mathbf{vex}(\boldsymbol{\mathcal{P}}_{a}(M\tilde{R}))$,
$E_{R}=E_{1}$, and $E_{P}=[E_{2},E_{3},E_{4}]^{\top}$. Figure \ref{fig:BD_NAVIGATION}
presents a block diagram of the observer proposed in \eqref{eq:NAV_Filter1}
and detailed in \eqref{eq:NAV_Filter1_Detailed}. 
\begin{figure*}
	\centering{}\includegraphics[scale=0.45]{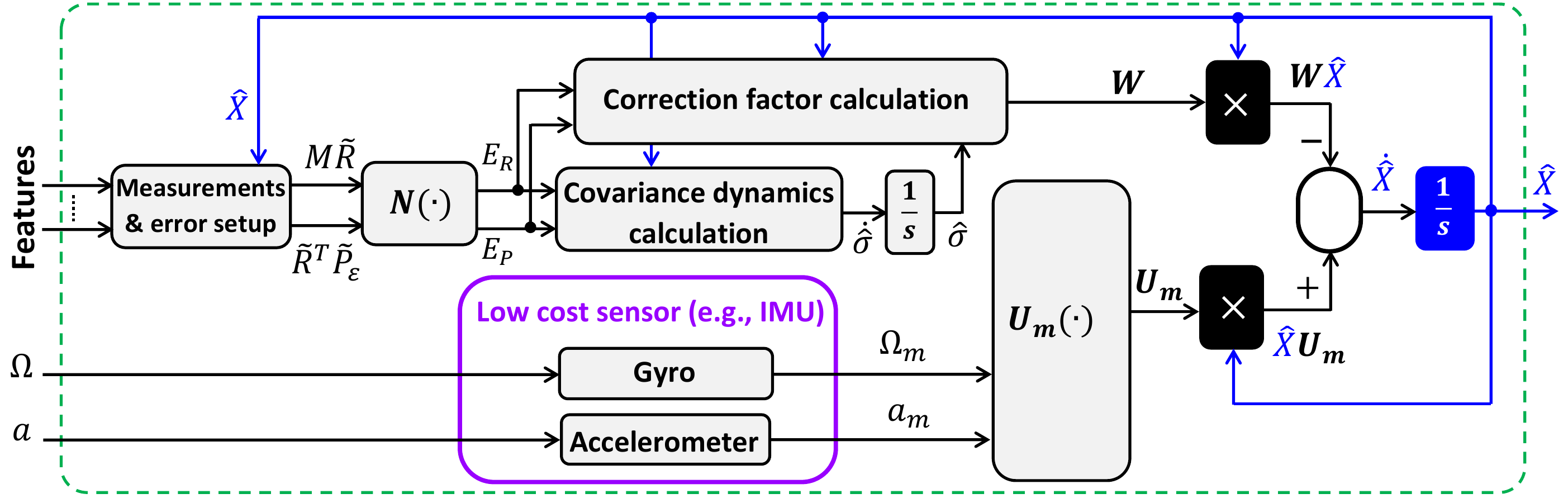}\caption{Block diagram of the proposed nonlinear stochastic observer for inertial
		navigation.}
	\label{fig:BD_NAVIGATION}
\end{figure*}

\begin{thm}
	\label{thm:Theorem1}Consider combining the stochastic dynamics in
	\eqref{eq:NAV_Detailed_True_STCH_dot} with feature measurements ($\overline{y}_{i}=X^{-1}\overline{p}_{i}$)
	for all $i=1,2,\ldots,n$, angular velocity measurements ($\Omega_{m}=\Omega+n_{\Omega}$),
	and acceleration measurements ($a_{m}=a+n_{a}$). Suppose that Assumption
	\ref{Assum:NAV_1Landmark} holds. Consider the nonlinear stochastic
	filter in \eqref{eq:NAV_Filter1} to be geared with $\overline{y}_{i}$,
	$\Omega_{m}$, $a_{m}$, \eqref{eq:NAV_Set_Measurements}, \eqref{eq:NAV_Vec_error},
	and \eqref{eq:NAV_trans2} given that $E_{R},E_{P}\in\mathcal{L}_{\infty}$
	and $\tilde{R}(0)\notin\mathbb{U}_{s}$ as in \eqref{eq:Unstable_set_SO3}.
	Then, 1) the error $e$ in \eqref{eq:NAV_Vec_error} is bounded by
	the dynamically reducing boundaries of $\xi$ in \eqref{eq:NAV_Presc}
	and 2) all the closed loop error signals are almost semi-globally
	uniformly ultimately bounded.
\end{thm}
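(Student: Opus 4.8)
The plan is to apply the stochastic Lyapunov criterion of Lemma~\ref{Lemm:NAV_deng} to a single potential function written in the unconstrained coordinates $E_{R},E_{P}$ and the adaptation error $\tilde{\sigma}$. A candidate that respects the cascade structure of the filter is
\[
\mathbb{V}=\tfrac{1}{2}E_{R}^{2}+\tfrac{1}{2}E_{P}^{\top}E_{P}+\tfrac{k_{v}}{2\mu}||\tilde{v}||^{2}+\tfrac{1}{2\gamma_{\sigma}}\tilde{\sigma}^{\top}\tilde{\sigma},
\]
where $\tilde{v}=\tilde{R}^{\top}\tilde{V}$ is the rotated velocity error closed through the $w_{a}$ channel. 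As a sum of positive-definite quadratics, $\mathbb{V}$ is trivially sandwiched between two class-$\mathcal{K}_{\infty}$ functions, so \eqref{eq:NAV_Vfunction_Lyap} holds by inspection and only the dissipation inequality \eqref{eq:NAV_dVfunction_Lyap} remains to be verified.

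First I would derive the error dynamics by combining the true stochastic model \eqref{eq:NAV_Detailed_True_STCH_dot} with the filter \eqref{eq:NAV_Filter1}. Differentiating $\tilde{X}=X\hat{X}^{-1}$ yields $d\tilde{R}=\tilde{R}[w_{\Omega}]_{\times}dt-\tilde{R}[\hat{R}\mathcal{Q}d\beta_{\Omega}]_{\times}$, together with analogous expressions for $\tilde{P}_{\varepsilon}$ and $\tilde{v}$, so the IMU noise surfaces as the diffusion of each error coordinate. Using the trace identity ${\rm Tr}\{\tilde{R}[w]_{\times}\}=-2\mathbf{vex}(\boldsymbol{\mathcal{P}}_{a}(\tilde{R}))^{\top}w$ recorded after \eqref{eq:NAV_Detailed_True_STCH_dot}, the drift of $e_{1}=||M\tilde{R}||_{{\rm I}}$ becomes proportional to $\boldsymbol{\Upsilon}(M\tilde{R})^{\top}w_{\Omega}$ and its diffusion to $\boldsymbol{\Upsilon}(M\tilde{R})^{\top}\hat{R}\mathcal{Q}d\beta_{\Omega}$. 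Feeding these into the chain rule \eqref{eq:SE3PPF_Trans_dot1} for $\dot{E}_{i}=\Delta_{i}(\dot{e}_{i}-\mu_{i}e_{i})$, now with the It\^{o} correction retained, produces the drift and diffusion of $E_{R},E_{P}$ with $\Delta_{R},\Delta_{P}$ as in \eqref{eq:SE3PPF_mu}.

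Next I would evaluate $\mathcal{L}\mathbb{V}$ through Definition~\ref{def:NAV_LV}. Inserting the designed $w_{\Omega}$ of \eqref{eq:NAV_Filter1_Detailed}, the deterministic attitude drift reduces to a leading term proportional to $-k_{w}(\Delta_{R}E_{R}+1)\Delta_{R}E_{R}||\boldsymbol{\Upsilon}(M\tilde{R})||^{2}$, which is non-positive since $E_{R}\geq0$ by Remark~\ref{rem:NAV_2}, and which I lower-bound using the left inequality \eqref{eq:SLAM_lemm1_1} of Lemma~\ref{Lemm:SLAM_Lemma1}. The factor $(1+{\rm Tr}\{\tilde{R}\})$ appearing there is strictly positive precisely because $\tilde{R}(0)\notin\mathbb{U}_{s}$ of \eqref{eq:Unstable_set_SO3}; this is the origin of the word ``almost'' in the statement. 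The position and velocity channels are closed by the $-\tfrac{k_{v}}{\varepsilon}\Delta_{P}E_{P}$ and $-\ell_{P}\tilde{R}^{\top}\tilde{P}_{\varepsilon}$ parts of $w_{V}$ together with $w_{a}$; using $\Delta_{P}\succ0$ and completing squares on the position--velocity cross-terms, these contribute $-c_{P}||E_{P}||^{2}-c_{v}||\tilde{v}||^{2}$ for positive $c_{P},c_{v}$ fixed by $k_{v},k_{a},\mu,\ell_{P}$.

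The hard part will be the second-order It\^{o} term $\tfrac{1}{2}{\rm Tr}\{h\overline{\mathcal{Q}}^{2}h^{\top}\mathbb{V}_{xx}\}$ on the attitude coordinate, because $E_{R}$ is a triple composition of the PPF transform, $||\cdot||_{{\rm I}}$, and $\tilde{R}$, so its Hessian in the noisy state is cumbersome. My expectation is that, after using $\mathcal{Q}^{2}=\mathcal{Q}\mathcal{Q}^{\top}$ and the definition \eqref{eq:NAV_s} of $\sigma$, this term is dominated by $\tfrac{k_{R}}{\gamma_{\sigma}}\sigma^{\top}{\rm diag}(\hat{R}^{\top}\boldsymbol{\Upsilon}(M\tilde{R}))\hat{R}^{\top}\boldsymbol{\Upsilon}(M\tilde{R})$ with $k_{R}=\gamma_{\sigma}\tfrac{||M\tilde{R}||_{{\rm I}}+2}{8}\Delta_{R}^{2}\exp(E_{R})$, which is exactly the gain of the adaptation law in \eqref{eq:NAV_Filter1_Detailed}. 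The $\hat{\sigma}$-feedforward inside $w_{\Omega}$ then cancels the estimated part, the cross-term $\tfrac{1}{\gamma_{\sigma}}\tilde{\sigma}^{\top}\dot{\tilde{\sigma}}$ of $\mathbb{V}$ cancels the residual $\tilde{\sigma}$-part, and the leakage $-k_{\sigma}\gamma_{\sigma}\hat{\sigma}$ leaves $-k_{\sigma}||\tilde{\sigma}||^{2}$ plus a bounded constant of order $k_{\sigma}||\sigma||^{2}$. Collecting the negative-definite pieces and the bias gives $\mathcal{L}\mathbb{V}\le-\eta_{1}\mathbb{V}+\eta_{2}$ with $\eta_{1}>0$ set by the smallest of $k_{w},k_{v},k_{a},k_{\sigma},\ell_{1},\ldots,\ell_{4}$ and $\eta_{2}\ge0$ the adaptation bias. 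Lemma~\ref{Lemm:NAV_deng} then supplies \eqref{eq:NAV_EVfunction_Lyap}, so $E_{R},E_{P},\tilde{\sigma}$ are semi-globally uniformly ultimately bounded in mean square; in particular $E_{R},E_{P}\in\mathcal{L}_{\infty}$, which by Remark~\ref{rem:NAV_1} is equivalent to $e$ staying inside the shrinking envelopes $\xi$ of \eqref{eq:NAV_Presc} (claim~1), and boundedness of all error signals in the sense of Definition~\ref{def:NAV_SGUUB} establishes claim~2.
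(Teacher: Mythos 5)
Your high-level plan (error dynamics, a stochastic Lyapunov function, Lemma \ref{Lemm:NAV_deng}) is the paper's plan, but your choice of Lyapunov function breaks two cancellations that the \emph{fixed} filter structure \eqref{eq:NAV_Filter1_Detailed} depends on, and the first of these is fatal on its own. Because $V$ is unmeasured, neither $w_{V}$ nor $w_{a}$ contains any feedback proportional to $\tilde{v}=\tilde{R}^{\top}\tilde{V}$ ($w_{a}$ depends only on $\Delta_{P}E_{P}$ and $\overrightarrow{\mathtt{g}}$), and $\tilde{v}^{\top}[w_{\Omega}]_{\times}\tilde{v}=0$. Hence $\mathcal{L}$ applied to your term $\frac{k_{v}}{2\mu}||\tilde{v}||^{2}$ produces only cross terms of the form $\tilde{v}^{\top}(\cdot)E_{P}$ and $\tilde{v}^{\top}(\tilde{R}-\mathbf{I}_{3})^{\top}\overrightarrow{\mathtt{g}}$, with coefficient exactly zero on $||\tilde{v}||^{2}$; a quadratic form whose $\tilde{v}$-diagonal entry is zero is indefinite whenever the cross coupling is nonzero, so no completion of squares can manufacture your claimed $-c_{v}||\tilde{v}||^{2}$. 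This is precisely why the paper's $\mathbb{V}^{b}$ in \eqref{eq:NAV_LyapPV_1} is quartic \emph{with the cross term} $-\frac{1}{\mu}||\tilde{R}^{\top}\tilde{V}||^{2}\tilde{V}^{\top}\tilde{R}E_{P}$: differentiating that cross term along the $E_{P}$-drift $\mathcal{F}_{P}$, which contains $\Delta_{P}\tilde{R}^{\top}\tilde{V}$, is the \emph{only} source of the dissipative $-\frac{\underline{c}_{1}}{\mu}||\tilde{R}^{\top}\tilde{V}||^{4}$ appearing in \eqref{eq:NAV_LyapPV_1dot_1-1}. Velocity damping here is obtained through the Lyapunov cross term, not through any feedback channel, so a pure sum of squares cannot close the loop.

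The second break is in the adaptation bookkeeping. The gains in \eqref{eq:NAV_Filter1_Detailed} are reverse-engineered from the attitude part $\exp(E_{R})||M\tilde{R}||_{{\rm I}}$ of \eqref{eq:NAV_LyapR_1}, whose first and second derivatives give $(||M\tilde{R}||_{{\rm I}}+1)\exp(E_{R})$ and $(||M\tilde{R}||_{{\rm I}}+2)\exp(E_{R})$; these make the It\^{o} term, the $\hat{\sigma}$-feedforward inside $w_{\Omega}$, and the adaptation gain all carry the \emph{same} coefficient $k_{R}/\gamma_{\sigma}=\frac{||M\tilde{R}||_{{\rm I}}+2}{8}\Delta_{R}^{2}\exp(E_{R})$, so $\sigma$, $\hat{\sigma}$, $\tilde{\sigma}$ recombine exactly and only $k_{\sigma}\tilde{\sigma}^{\top}\hat{\sigma}$ survives. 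With your $\frac{1}{2}E_{R}^{2}$, the It\^{o} term carries $\frac{\Delta_{R}^{2}}{8}$ while the feedforward contribution carries $\frac{E_{R}(||M\tilde{R}||_{{\rm I}}+2)}{8(||M\tilde{R}||_{{\rm I}}+1)}\Delta_{R}^{2}$, and neither equals $k_{R}/\gamma_{\sigma}$. Your domination step upper-bounds the positive It\^{o} term by the larger $\frac{k_{R}}{\gamma_{\sigma}}\sigma^{\top}{\rm diag}(\hat{R}^{\top}\boldsymbol{\Upsilon}(M\tilde{R}))\hat{R}^{\top}\boldsymbol{\Upsilon}(M\tilde{R})$, but then the $\hat{\sigma}$-part of that bound must be absorbed by the feedforward, which would require $E_{R}\geq(||M\tilde{R}||_{{\rm I}}+1)\exp(E_{R})$ --- false for every $E_{R}$. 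The leftover positive residual is of order $\Delta_{R}^{2}\exp(E_{R})||M\tilde{R}||_{{\rm I}}\,||\hat{\sigma}||$ and is not dominated by $-\eta_{1}\mathbb{V}+\eta_{2}$. Since the theorem concerns the given filter (you cannot redesign $k_{R}$ or $w_{\Omega}$ to match your $\mathbb{V}$), the repair is to adopt the paper's $\mathbb{V}^{a}$ and $\mathbb{V}^{b}$; the remaining ingredients of your argument --- Lemma \ref{Lemm:SLAM_Lemma1}, the role of $\tilde{R}(0)\notin\mathbb{U}_{s}$ in keeping $1+{\rm Tr}\{\tilde{R}\}>0$, Young's inequality on $k_{\sigma}\tilde{\sigma}^{\top}\hat{\sigma}$, Lemma \ref{Lemm:NAV_deng}, and Remark \ref{rem:NAV_1} for claim 1 --- are the right ones and match the paper.
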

\begin{proof}Recall $\tilde{R}=R\hat{R}^{\top}$, $\tilde{P}=P-\tilde{R}\hat{P}$,
	$\tilde{P}_{\varepsilon}=\tilde{P}-(\mathbf{I}_{3}-\tilde{R})p_{c}$,
	and $\tilde{V}=V-\tilde{R}\hat{V}$. From \eqref{eq:NAV_Detailed_True_STCH_dot}
	and \eqref{eq:NAV_Filter1_Detailed}, one has
	
	\begin{equation}
	\begin{cases}
	d\tilde{R} & =\tilde{R}[w_{\Omega}]_{\times}dt-\tilde{R}[\hat{R}\mathcal{Q}d\beta_{\Omega}]_{\times}\\
	d\tilde{P} & =(\tilde{V}+\tilde{R}w_{V})dt+\tilde{R}[\hat{P}]_{\times}\hat{R}\mathcal{Q}d\beta_{\Omega}\\
	d\tilde{V} & =((\mathbf{I}_{3}-\tilde{R})g+\tilde{R}w_{a})dt-\tilde{R}\hat{R}\mathcal{Q}d\beta_{a}\\
	& \hspace{1em}-\tilde{R}[\hat{V}]_{\times}\hat{R}\mathcal{Q}d\beta
	\end{cases}\label{eq:NAV_Filter1_Error_dot}
	\end{equation}
	With the objective of designing a filter that directly uses the measurements
	in \eqref{eq:NAV_Set_Measurements}, one finds
	\begin{equation}
	\begin{cases}
	d||M\tilde{R}||_{{\rm I}} & =\frac{1}{2}\boldsymbol{\Upsilon}(M\tilde{R})^{\top}w_{\Omega}dt-\frac{1}{2}\boldsymbol{\Upsilon}(M\tilde{R})^{\top}\hat{R}\mathcal{Q}d\beta_{\Omega}\\
	d\tilde{R}^{\top}\tilde{P}_{\varepsilon} & =\underbrace{(\tilde{R}^{\top}\tilde{V}-[p_{c}-\tilde{R}^{\top}\tilde{P}_{\varepsilon}]_{\times}w_{\Omega}+w_{V})}_{f_{P}}dt\\
	& \hspace{1em}+\underbrace{-[\hat{P}-p_{c}+\tilde{R}^{\top}\tilde{P}_{\varepsilon}]_{\times}\hat{R}}_{h_{P}}\mathcal{Q}d\beta_{\Omega}\\
	d\tilde{R}^{\top}\tilde{V} & =\underbrace{(-[w_{\Omega}]_{\times}\tilde{R}^{\top}\tilde{V}+(\tilde{R}-\mathbf{I}_{3})^{\top}\overrightarrow{\mathtt{g}}+w_{a})}_{f_{V}}dt\\
	& \hspace{1em}+\underbrace{-[\begin{array}{cc}
		[\tilde{R}^{\top}V]_{\times}\hat{R} & \hat{R}\end{array}]}_{h_{V}}[\begin{array}{cc}
	\mathcal{Q}d\beta_{\Omega} & \mathcal{Q}d\beta_{a}\end{array}]^{\top}
	\end{cases}\label{eq:NAV_Filter1_Error_dotf}
	\end{equation}
	Therefore, from \eqref{eq:NAV_Filter1_Error_dotf} and \eqref{eq:SE3PPF_Trans_dot1},
	one obtains the following stochastic dynamics:
	\begin{equation}
	\begin{cases}
	dE_{R} & =\Delta_{R}(d||M\tilde{R}||_{{\rm I}}-\mu_{R}||M\tilde{R}||_{{\rm I}}dt)\\
	dE_{P} & =\underbrace{\Delta_{P}(f_{P}-\mu_{P}\tilde{R}^{\top}\tilde{P}_{\varepsilon})}_{\mathcal{F}_{P}}dt+\Delta_{P}h_{P}\mathcal{Q}d\beta_{\Omega}
	\end{cases}\label{eq:NAV_Filter1_Trans_dot}
	\end{equation}
	Consider the following Lyapunov function candidate $\mathbb{V}=\mathbb{V}(||M\tilde{R}||_{{\rm I}},E_{R},E_{P},\tilde{R}^{\top}\tilde{V},\tilde{\sigma})$:
	\begin{equation}
	\mathbb{V}=\mathbb{V}^{a}+\mathbb{V}^{b}\label{eq:NAV_LyapT_1}
	\end{equation}
	with $\mathbb{V}^{a}=\mathbb{V}^{a}(||M\tilde{R}||_{{\rm I}},E_{R},\tilde{\sigma})$
	and $\mathbb{V}^{b}=\mathbb{V}^{b}(E_{P},\tilde{R}^{\top}\tilde{V})$.
	The first part of \eqref{eq:NAV_LyapT_1} has the map $\mathbb{V}^{a}:\mathbb{SO}\left(3\right)\times\mathbb{R}\times\mathbb{R}^{3}\rightarrow\mathbb{R}_{+}$
	and is selected as
	\begin{equation}
	\mathbb{V}^{a}=\exp(E_{R})||M\tilde{R}||_{{\rm I}}+\frac{1}{2\gamma_{\sigma}}||\tilde{\sigma}||^{2}\label{eq:NAV_LyapR_1}
	\end{equation}
	where $E_{R}\geq0$ and $||M\tilde{R}||_{{\rm I}}\geq0\forall t\geq0$,
	$E_{R}\neq0$ for all $||M\tilde{R}||_{{\rm I}}\neq0$, and $E_{R}=0$
	only at $||M\tilde{R}||_{{\rm I}}=0$ as discussed in Section \ref{sec:Guaranteed-Performance}.
	The first and the second partial derivatives of $\exp(E_{R})||M\tilde{R}||_{{\rm I}}$
	are $(||M\tilde{R}||_{{\rm I}}+1)\exp(E_{R})$ and $(||M\tilde{R}||_{{\rm I}}+2)\exp(E_{R})$,
	respectively. In view of \eqref{eq:NAV_dVfunction_Lyap}, \eqref{eq:NAV_Filter1_Error_dotf},
	and \eqref{eq:NAV_Filter1_Trans_dot}, one has{\small{}
		\begin{align}
			\mathcal{L}\mathbb{V}^{a}= & \exp(E_{R})(||M\tilde{R}||_{{\rm I}}+1)\Delta_{R}(\frac{1}{2}\boldsymbol{\Upsilon}(M\tilde{R})^{\top}w_{\Omega}-\mu_{R}||M\tilde{R}||_{{\rm I}})\nonumber \\
			& +k_{R}\boldsymbol{\Upsilon}(M\tilde{R})^{\top}\hat{R}\mathcal{Q}^{2}\hat{R}^{\top}\boldsymbol{\Upsilon}(M\tilde{R})-\frac{1}{\gamma_{\sigma}}\tilde{\sigma}^{\top}\dot{\hat{\sigma}}\nonumber \\
			\leq & \exp(E_{R})(||M\tilde{R}||_{{\rm I}}+1)\Delta_{R}(\frac{1}{2}\boldsymbol{\Upsilon}(M\tilde{R})^{\top}w_{\Omega}-\mu_{R}||M\tilde{R}||_{{\rm I}})\nonumber \\
			& +k_{R}\boldsymbol{\Upsilon}(M\tilde{R})^{\top}\hat{R}{\rm diag}(\sigma)\hat{R}^{\top}\boldsymbol{\Upsilon}(M\tilde{R})-\frac{1}{\gamma_{\sigma}}\tilde{\sigma}^{\top}\dot{\hat{\sigma}}\label{eq:NAV_LyapR_1dot_1}
		\end{align}
	}where $\mathcal{Q}^{2}$ has been replaced by ${\rm diag}(\sigma)$
	in \eqref{eq:NAV_s}. Replacing $w_{\Omega}$ and $\dot{\hat{\sigma}}$
	with their definitions in \eqref{eq:NAV_Filter1_Detailed} and recalling
	\eqref{eq:SLAM_lemm1_1} in Lemma \ref{Lemm:SLAM_Lemma1}, one has
	\begin{align}
		\mathcal{L}\mathbb{V}^{a}\leq & -(1+{\rm Tr}\{\tilde{R}\})\frac{(k_{w}-c_{m})\underline{\lambda}_{\overline{\mathbf{M}}}\Delta_{R}^{2}}{4}\exp(E_{R})||M\tilde{R}||_{{\rm I}}\nonumber \\
		& +k_{\sigma}\tilde{\sigma}^{\top}\hat{\sigma}\label{eq:NAV_LyapR_1dot_2}
	\end{align}
	Since $|\mu_{R}|\leq\ell_{1}$, let $c_{m}=\frac{4\ell_{1}}{\underline{\lambda}_{\overline{\mathbf{M}}}(1+{\rm Tr}\{\tilde{R}\})}$
	and select $k_{w}>c_{m}$. Also, in view of Young's inequality, one
	has $k_{\sigma}\tilde{\sigma}^{\top}\sigma\leq(k_{\sigma}/2)||\sigma||^{2}+(k_{\sigma}/2)||\tilde{\sigma}||^{2}$.
	Define $c_{R}=\underline{\lambda}_{\overline{\mathbf{M}}}(1+{\rm Tr}\{\tilde{R}\})$.
	\begin{align}
		\mathcal{L}\mathbb{V}^{a}\leq & -\frac{(k_{w}-c_{m})c_{R}\Delta_{R}^{2}}{4}\exp(E_{R})||M\tilde{R}||_{{\rm I}}-\frac{k_{\sigma}}{2}||\tilde{\sigma}||^{2}\nonumber \\
		& +(k_{\sigma}/2)||\sigma||^{2}\label{eq:NAV_LyapR_1dot_Final}
	\end{align}
	Turning our attention to the second part of \eqref{eq:NAV_LyapT_1},
	define the following Lyapunov function candidate:
	\begin{equation}
	\mathbb{V}^{b}=\frac{1}{4}||E_{P}||^{4}+\frac{1}{4k_{a}}||\tilde{R}^{\top}\tilde{V}||^{4}-\frac{1}{\mu}||\tilde{R}^{\top}\tilde{V}||^{2}\tilde{V}^{\top}\tilde{R}E_{P}\label{eq:NAV_LyapPV_1}
	\end{equation}
	In view of \eqref{eq:NAV_dVfunction_Lyap}, \eqref{eq:NAV_Filter1_Error_dotf},
	and \eqref{eq:NAV_Filter1_Trans_dot}, one has
	\begin{align}
		\mathcal{L}\mathbb{V}^{b} & =(||E_{P}||^{2}E_{P}^{\top}-\frac{1}{\mu}||\tilde{R}^{\top}\tilde{V}||^{2}\tilde{V}^{\top}\tilde{R})\mathcal{F}_{P}\nonumber \\
		& +(\frac{1}{k_{a}}||\tilde{R}^{\top}\tilde{V}||^{2}\tilde{V}^{\top}\tilde{R}-\frac{1}{\mu}||\tilde{R}^{\top}\tilde{V}||^{2}E_{P}^{\top})f_{V}\nonumber \\
		& +\frac{1}{2}{\rm Tr}\left\{ (||E_{P}||^{2}\mathbf{I}_{3}+2E_{P}E_{P}^{\top})\Delta_{P}h_{P}\mathcal{Q}^{2}h_{P}^{\top}\Delta_{P}\right\} \nonumber \\
		& +\frac{1}{2k_{a}}{\rm Tr}\left\{ (||\tilde{R}^{\top}\tilde{V}||^{2}\mathbf{I}_{3}+2E_{P}\tilde{V}^{\top}\tilde{R})h_{V}\mathcal{Q}^{2}h_{V}^{\top}\right\} \label{eq:NAV_LyapPV_1dot_1}
	\end{align}
	The equation in \eqref{eq:NAV_LyapPV_1dot_1} can be presented in
	an inequality form as below:
	\begin{align}
		\mathcal{L}\mathbb{V}^{b}\leq & -\underline{c}_{2}(\frac{4k_{v}-3}{4})||E_{P}||^{4}\nonumber \\
		& -(\frac{\underline{c}_{1}}{\mu}-\frac{\overline{c}_{2}}{2\mu k_{v}}-\frac{||g||}{2k_{a}}-\frac{\underline{c}_{1}}{4k_{a}})||\tilde{R}^{\top}\tilde{V}||^{4}\nonumber \\
		& +(\frac{\overline{c}_{1}+\overline{c}_{2}}{\mu}+\frac{\underline{c}_{1}}{4k_{a}}+\frac{||g||\overline{\lambda}_{M}}{2\mu})||\tilde{R}^{\top}\tilde{V}||^{2}||E_{P}||^{2}\nonumber \\
		& +(\frac{||g||\overline{\lambda}_{M}}{2k_{a}}+\frac{||g||\overline{\lambda}_{M}}{2\mu})||\tilde{R}^{\top}\tilde{V}||^{2}||\mathbf{I}_{3}-\tilde{R}||_{F}\nonumber \\
		& +(\frac{3c_{P}}{2\underline{c}_{2}}+\frac{c_{V}}{4k_{a}\underline{c}_{1}})||\sigma||^{2}\label{eq:NAV_LyapPV_1dot_1-1}
	\end{align}
	where $\underline{c}_{1}=\min\{\Delta_{P}\}$, $\overline{c}_{1}=\max\{\Delta_{P}\}$,
	$\underline{c}_{2}=\min\{\Delta_{P}^{2}\}$, $\overline{c}_{2}=\max\{\Delta_{P}^{2}\}$,
	$c_{V}=\sup_{t\geq0}(1+||V||^{2})$, and $c_{P}=\sup_{t\geq0}||P-p_{c}||^{2}$.
	Let $c_{x}=\max\{\underline{c}_{1},\overline{c}_{1},\overline{c}_{2},||g||\overline{\lambda}_{M}\}$
	and $c_{d}=\max\{2\frac{||g||\overline{\lambda}_{M}}{\mu},2\frac{||g||\overline{\lambda}_{M}}{k_{a}}\}$.
	Consequently, one obtains
	
	\begin{align}
		\mathcal{L}\mathbb{V}^{b}\leq & -\underline{c}_{2}(k_{v}-\frac{3}{4})||E_{P}||^{4}-(\frac{\underline{c}_{1}}{\mu}-\frac{c_{x}}{k_{a}})||\tilde{R}^{\top}\tilde{V}||^{4}\nonumber \\
		& +\frac{5c_{x}}{4\mu}||\tilde{R}^{\top}\tilde{V}||^{2}||E_{P}||^{2}+2c_{d}||\tilde{R}^{\top}\tilde{V}||^{2}||\mathbf{I}_{3}-\tilde{R}||_{F}\nonumber \\
		& +(\frac{3c_{P}}{2\underline{c}_{2}}+\frac{c_{V}}{4k_{a}\underline{c}_{1}})||\sigma||^{2}\label{eq:NAV_LyapPV_1dot_Final}
	\end{align}
	such that
	\begin{align}
		\mathcal{L}\mathbb{V}^{b}\leq & 2c_{d}||\tilde{R}^{\top}\tilde{V}||^{2}||\mathbf{I}_{3}-\tilde{R}||_{F}+(\frac{3c_{P}}{2\underline{c}_{2}}+\frac{c_{V}}{4k_{a}\underline{c}_{1}})||\sigma||^{2}\nonumber \\
		& -\varepsilon_{PV}^{\top}\underbrace{\left[\begin{array}{cc}
				\underline{c}_{2}\frac{4k_{v}-3}{4} & \frac{5c_{x}}{8\mu}\\
				\frac{5c_{x}}{8\mu} & \frac{\underline{c}_{1}}{\mu}-\frac{c_{x}}{k_{v}}
			\end{array}\right]}_{A_{1}}\varepsilon_{PV}\label{eq:NAV_LyapPV_1dot_Final-1}
	\end{align}
	where $\varepsilon_{PV}=[||E_{P}||,||\tilde{R}^{\top}\tilde{V}||]^{\top}$.
	It can be deduced that $A_{1}$ can be made positive by selecting
	$k_{v}>\max\{3/4,\mu c_{x}/\underline{c}_{1},c_{x}\mu^{2}/(\mu\underline{c}_{1}-c_{x})\}$.
	Recalling \eqref{eq:NAV_LyapT_1}, \eqref{eq:NAV_LyapR_1dot_Final},
	and \eqref{eq:NAV_LyapPV_1dot_Final}, one obtains
	\begin{align}
		\mathcal{L}\mathbb{V}\leq & -\frac{(k_{w}-c_{m})c_{R}\Delta_{R}^{2}}{4}\exp(E_{R})||M\tilde{R}||_{{\rm I}}-\underline{\lambda}_{A_{1}}||\varepsilon_{PV}||^{2}\nonumber \\
		& +2c_{d}||\tilde{R}^{\top}\tilde{V}||^{2}||\mathbf{I}_{3}-\tilde{R}||_{F}-\frac{k_{\sigma}}{2}||\tilde{\sigma}||^{2}\nonumber \\
		& +(\frac{k_{\sigma}}{2}+\frac{3c_{P}}{2\underline{c}_{2}}+\frac{c_{V}}{4k_{a}\underline{c}_{1}})||\sigma||^{2}\label{eq:NAV_LyapPV_1dot_2}
	\end{align}
	which is 
	\begin{align}
		\mathcal{L}\mathbb{V}\leq & -\varepsilon_{T}^{\top}\underbrace{\left[\begin{array}{cc}
				\frac{(k_{w}-c_{m})c_{R}\underline{c}_{2}}{4} & c_{d}\mathbf{1}_{1\times2}\\
				c_{d}\mathbf{1}_{2\times1} & \underline{\lambda}_{1}\mathbf{I}_{2}
			\end{array}\right]}_{A_{2}}\varepsilon_{T}-\frac{k_{\sigma}}{2}||\tilde{\sigma}||^{2}\nonumber \\
		& +(\frac{3c_{P}}{2\underline{c}_{2}}+\frac{c_{V}}{4k_{a}\underline{c}_{1}}+\frac{k_{\sigma}}{2})||\sigma||^{2}\nonumber \\
		\leq & -\underline{\lambda}_{A_{2}}||\varepsilon_{T}||^{2}-\frac{k_{\sigma}}{2}||\tilde{\sigma}||^{2}+\eta_{2}||\sigma||^{2}\label{eq:NAV_LyapT_1dot_Final}
	\end{align}

\begin{algorithm}
	\caption{\label{alg:Alg_Disc0}Discrete nonlinear stochastic filter with guaranteed
		performance}
	{\footnotesize
		
		\textbf{Initialization}:
		\begin{enumerate}
			\item[{\footnotesize{}1:}] Set $\hat{R}[0]\in\mathbb{SO}\left(3\right)$, and $\hat{P}[0],\hat{V}[0],\hat{\sigma}[0]\in\mathbb{R}^{3}$
			\item[{\footnotesize{}2:}] Start with $k=0$ and select the design parameters
		\end{enumerate}
		\textbf{while }(1)\textbf{ do}
		\begin{enumerate}
			\item[] \textcolor{blue}{/{*} Prediction step {*}/}
			\item[{\footnotesize{}3:}] $\hat{X}_{k|k}=\left[\begin{array}{ccc}
			\hat{R}_{k|k} & \hat{P}_{k|k} & \hat{V}_{k|k}\\
			0_{1\times3} & 1 & 0\\
			0_{1\times3} & 0 & 1
			\end{array}\right]$ and \\
			$\hat{U}_{k}=u([\Omega_{m}[k]\text{\ensuremath{]_{\times}}},0_{3\times1},a_{m}[k],1)$,
			see \eqref{eq:NAV_u}
			\item[{\footnotesize{}4:}] $\hat{X}_{k+1|k}=\hat{X}_{k|k}\exp(\hat{U}_{k}\Delta t)$
			\item[] \textcolor{blue}{/{*} Update step {*}/}
			\item[{\footnotesize{}5:}] $\begin{cases}
			p_{c} & =\frac{1}{s_{T}}\sum_{i=1}^{n}s_{i}p_{i}[k],\hspace{1em}s_{T}=\sum_{i=1}^{n}s_{i}\\
			M_{k} & =\sum_{i=1}^{n}s_{i}p_{i}[k]p_{i}^{\top}[k]-s_{T}p_{c}p_{c}^{\top}\\
			M\tilde{R}_{k} & =\sum_{i=1}^{n}s_{i}\left(p_{i}[k]-p_{c}\right)y_{i}^{\top}[k]\hat{R}_{k+1|k}^{\top}\\
			\tilde{R}^{\top}\tilde{P}_{\varepsilon}[k] & =\frac{1}{s_{T}}\sum_{i=1}^{n}s_{i}(p_{i}[k]-\hat{R}_{k+1|k}y_{i}[k]-\hat{P}_{k+1|k})
			\end{cases}$
			\item[{\footnotesize{}6:}] $[e_{1}[k],e_{2}[k],e_{3}[k],e_{4}[k]]^{\top}=\left[||M\tilde{R}_{k}||_{{\rm I}},\tilde{P}_{\varepsilon}^{\top}\tilde{R}[k]\right]^{\top}$
			\item[{\footnotesize{}7:}] \textbf{for} $i=1:4$ \hspace{0.5cm}\textcolor{blue}{/{*} Guaranteed
				Performance{*}/}
			\item[{\footnotesize{}8:}] \hspace{0.5cm}$\xi_{i}[k]=(\xi_{0}-\xi_{\infty})\exp(-\ell k\Delta t)+\xi_{\infty}$
			\item[{\footnotesize{}9:}] \hspace{0.5cm}\textbf{if} $e_{i}[k]>\xi_{i}[k]$ \textbf{then}
			\item[{\footnotesize{}10:}] \hspace{0.5cm}\hspace{0.5cm}$\xi_{i}[k]=e_{i}[k]+\epsilon$,\hspace{0.5cm}\textcolor{blue}{{}
				/{*} $\epsilon$ is a small constant {*}/}
			\item[{\footnotesize{}11:}] \hspace{0.5cm}\textbf{end if}
			\item[{\footnotesize{}12:}] \hspace{0.5cm}$\begin{cases}
			E_{i} & =\frac{1}{2}\text{ln}\frac{\delta_{i}+e_{i}[k]/\xi_{i}[k]}{\delta_{i}-e_{i}[k]/\xi_{i}[k]}\\
			\Delta_{i} & =\frac{1}{2\xi_{i}[k]}(\frac{1}{\delta_{i}+e_{i}[k]/\xi_{i}[k]}+\frac{1}{\delta_{i}-e_{i}[k]/\xi_{i}[k]})
			\end{cases}$
			\item[{\footnotesize{}13:}] \textbf{end for}
			\item[{\footnotesize{}14:}] Set $E_{R}[k]=E_{1}$, $E_{P}[k]=[E_{2},E_{3},E_{4}]^{\top}$, $\Delta_{R}[k]=\Delta_{1}$,
			and $\Delta_{P}[k]={\rm diag}(\Delta_{2},\Delta_{3},\Delta_{4})$.
			\item[{\footnotesize{}15:}] $\begin{cases}
			w_{\Omega}[k] & =-k_{w}(E_{R}[k]+1)\Delta_{R}[k]\boldsymbol{\Upsilon}(M\tilde{R}_{k})\\
			& \hspace{1em}-\frac{\Delta_{R}[k]}{4}\frac{||M\tilde{R}_{k}||_{{\rm I}}+2}{||M\tilde{R}_{k}||_{{\rm I}}+1}\hat{R}_{k}{\rm diag}(\hat{R}_{k}^{\top}\boldsymbol{\Upsilon}(M\tilde{R}_{k}))\hat{\sigma}_{k}\\
			w_{V}[k] & =[p_{c}[k]]_{\times}w_{\Omega}[k]-\ell_{P}\tilde{R}^{\top}\tilde{P}_{\varepsilon}[k]\\
			& \hspace{1em}-\frac{k_{v}}{\varepsilon}\Delta_{P}[k]E_{P}[k]\\
			w_{a}[k] & =-k_{a}(\frac{k_{v}}{\mu}\Delta_{P}[k]+\mathbf{I}_{3})\Delta_{P}[k]E_{P}[k]
			\end{cases}$
			\item[{\footnotesize{}16:}] $W_{k}=u([w_{\Omega}[k]\text{\ensuremath{]_{\times}}},w_{V}[k],w_{a}[k]-\overrightarrow{\mathtt{g}},0)$,
			see \eqref{eq:NAV_u}
			\item[{\footnotesize{}17:}] $k_{R}=\gamma_{\sigma}\frac{||M\tilde{R}_{k}||_{{\rm I}}+2}{8}\Delta_{R}[k]^{2}\exp(E_{R}[k])$
			\item[{\footnotesize{}18:}] {\small{}$\hat{\sigma}_{k+1}=\hat{\sigma}_{k}+\Delta tk_{R}{\rm diag}(\hat{R}_{k+1|k}^{\top}\boldsymbol{\Upsilon}(M\tilde{R}_{k}))\hat{R}_{k+1|k}^{\top}\boldsymbol{\Upsilon}(M\tilde{R}_{k})-\Delta tk_{\sigma}\gamma_{\sigma}\hat{\sigma}_{k}$}{\small\par}
			\item[{\footnotesize{}19:}] $\hat{X}_{k+1|k+1}=\exp(-W_{k}\Delta t)\hat{X}_{k+1|k}$
			\item[{\footnotesize{}20:}] $k=k+1$
		\end{enumerate}
		\textbf{end while}}
\end{algorithm}

	where $\varepsilon_{T}=[\sqrt{\exp(E_{\tilde{R}})||M\tilde{R}||_{{\rm I}}},||E_{P}||^{2},||\tilde{R}^{\top}\tilde{V}||^{2}]^{\top}$
	and $\eta_{2}=\frac{3c_{P}}{2\underline{c}_{2}}+\frac{c_{V}}{4k_{a}\underline{c}_{1}}+\frac{k_{\sigma}}{2}$.
	$A_{2}$ is made positive by selecting $(k_{w}-c_{m})c_{R}\underline{c}_{2}\underline{\lambda}_{1}>c_{d}^{2}$.
	Based on \eqref{eq:NAV_LyapT_1dot_Final}, one finds
	\begin{align}
		\mathcal{L}\mathbb{V}\leq & -\underline{\lambda}_{A_{2}}(\exp(E_{\tilde{R}})||M\tilde{R}||_{{\rm I}}+||E_{P}||^{4}+||\tilde{R}^{\top}\tilde{V}||^{4})\nonumber \\
		& -\frac{k_{\sigma}}{2}||\tilde{\sigma}||^{2}+\eta_{2}||\sigma||^{2}\label{eq:NAV_LyapT_1dot_Final-1}
	\end{align}
	which means that
	\begin{equation}
	\frac{d\mathbb{E}[\mathbb{V}]}{dt}=\mathbb{E}[\mathcal{L}\mathbb{V}]\leq-\eta_{1}\mathbb{E}[\mathbb{V}]+\eta_{2}\label{eq:SLAM_Lyap2_final3}
	\end{equation}
	with $\eta_{1}=\min\{\underline{\lambda}_{A_{2}},k_{\sigma}/2\}$.
	In view of Lemma \ref{Lemm:NAV_deng}, the inequality in \eqref{eq:SLAM_Lyap2_final3}
	shows that 
	\[
	0\leq\mathbb{E}[\mathbb{V}(t)]\leq\mathbb{V}(0){\rm exp}(-\eta_{1}t)+\frac{\eta_{2}}{\eta_{1}},\,\forall t\geq0.
	\]
	Therefore, the vector $[\varepsilon_{T}^{\top},||\tilde{\sigma}||]$
	is almost semi-globally uniformly ultimately bounded which completes
	the proof.\end{proof}

The filter in \eqref{eq:NAV_Filter1_Detailed} is proposed in a continuous
form. For the purposes of low-cost electronic kit implementation,
a discrete version practicable at a low sampling rate is necessary.
Let $\Delta t$ denote a small sample time. The complete discrete
implementation steps are detailed in Algorithm \ref{alg:Alg_Disc0}.

\section{Experimental Results \label{sec:SE3_Simulations}}

This section experimentally demonstrates the performance of the proposed
nonlinear stochastic filter for inertial navigation with guaranteed
performance on the Lie group of $\mathbb{SE}_{2}\left(3\right)$.
The proposed stochastic filter in its discrete form (Algorithm \ref{alg:Alg_Disc0})
has been tested on the real-word data obtained from the EuRoC dataset
\cite{Burri2016Euroc}. The dataset is comprised of the ground truth,
IMU measurements provided by ADIS16448 at a sampling rate of 200 Hz,
and stereo images collected by MT9V034 at a sampling rate of 20 Hz.
The camera parameters were calibrated using the Stereo Camera Calibrator
Application ensuring that the images are not distorted. Figure \ref{fig:NAV_Camera}
presents right and left feature detection. The detection of each photo
in the EuRoC dataset \cite{Burri2016Euroc} was tracked based on minimum
eigenvalue feature detection using Kanade-Lucas-Tomasi (KLT) \cite{shi1994good}.
The noise inherent to low-cost IMUs has been supplemented with additional
noise of $n_{\Omega}=\mathcal{N}(0,0.11)$ (rad/sec) for the measurements
of angular velocity while $n_{a}=\mathcal{N}(0,0.1)$ (m/sec$^{2}$)
for the measurements of acceleration. Note that $\mathcal{N}(0,0.11)$
is a notation of normally distributed noise with a zero mean and a
standard deviation of $0.11$. Owing to the lack of real-world features
in the dataset, feature locations are selected arbitrarily. The maximum
number of features were set to 30. The downloadable datasets can be accessed at the following \href{https://projects.asl.ethz.ch/datasets/doku.php?id=kmavvisualinertialdatasets}{URL}. The initial covariance estimate
is set to $\hat{\sigma}\left(0\right)=[0,0,0]^{\top}$, and the design
parameters are selected as $k_{w}=3$, $k_{v}=3$, $k_{a}=20$, $\gamma_{\sigma}=3$,
$k_{\sigma}=0.1$, $\mu=0.8$, $\varepsilon=0.8$, $\ell=[1,1,1,1]^{\top}$,
$\xi^{\infty}=[0.03,0.1,0.1,0.1]^{\top}$, and 
\[
\xi_{i}^{0}=\delta=\left[\begin{array}{c}
1.2||M\tilde{R}(0)||_{{\rm I}}+0.5\\
2\tilde{R}(0)^{\top}\tilde{P}_{\varepsilon}(0)+2\times1_{3\times1}
\end{array}\right]
\]
For large error in initialization, set the initial estimates of attitude,
position, and linear velocity to
\begin{align*}
	\hat{R}_{0}=\hat{R}\left(0\right) & =\mathbf{I}_{3},\hspace{1em}\hat{P}\left(0\right)=\hat{V}\left(0\right)=[0,0,0]^{\top}
\end{align*}

\begin{figure}[H]
	\centering{}\includegraphics[scale=0.2]{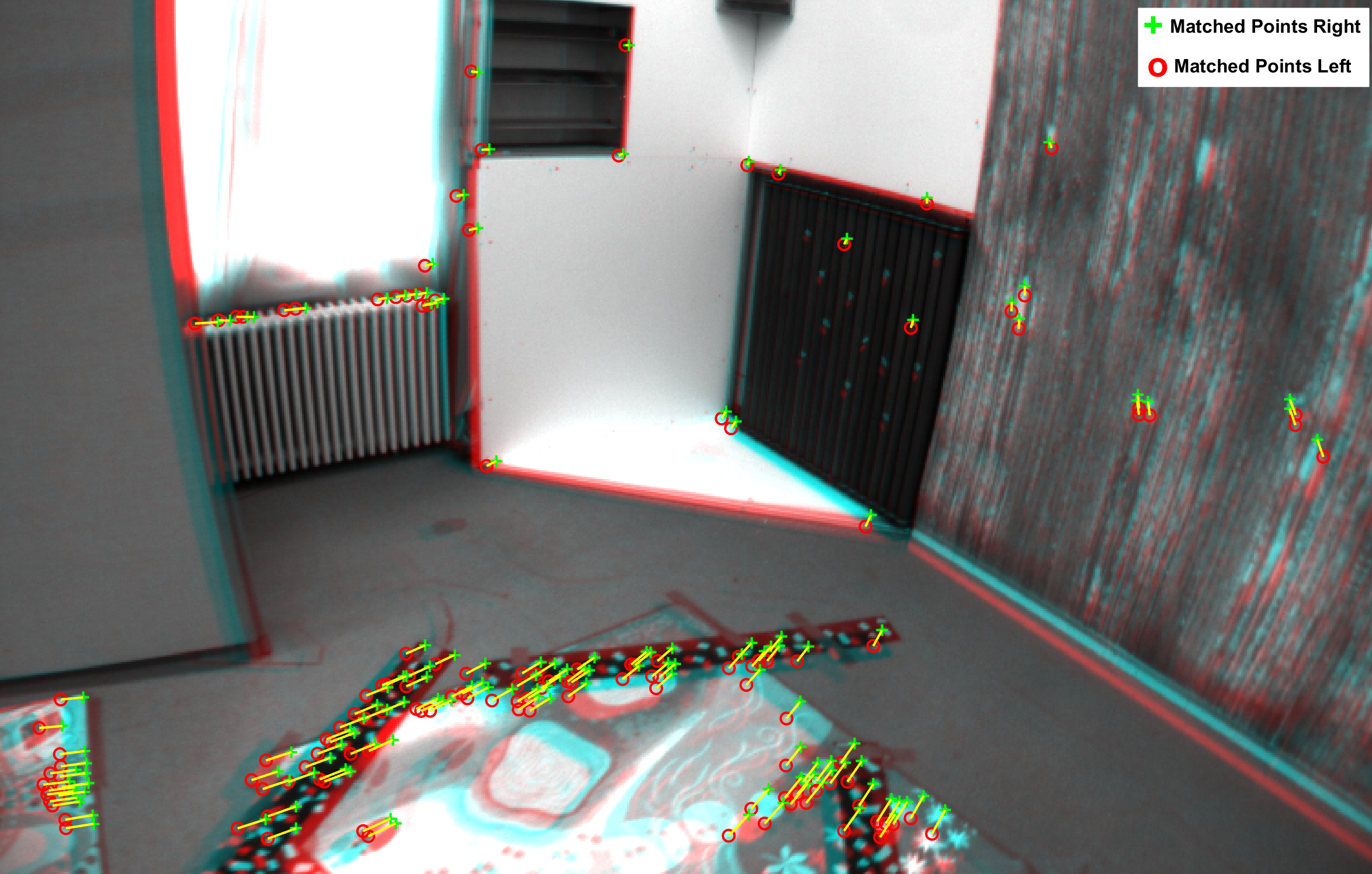}\caption{Illustration of right and left feature detection. The detection tracking
		is performed using a stereo camera through the Computer Vision System
		Toolbox with MATLAB R2020a. The photo is part of the EuRoC dataset
		\cite{Burri2016Euroc}.}
	\label{fig:NAV_Camera}
\end{figure}

\begin{figure*}
	\centering{}\includegraphics[scale=0.30]{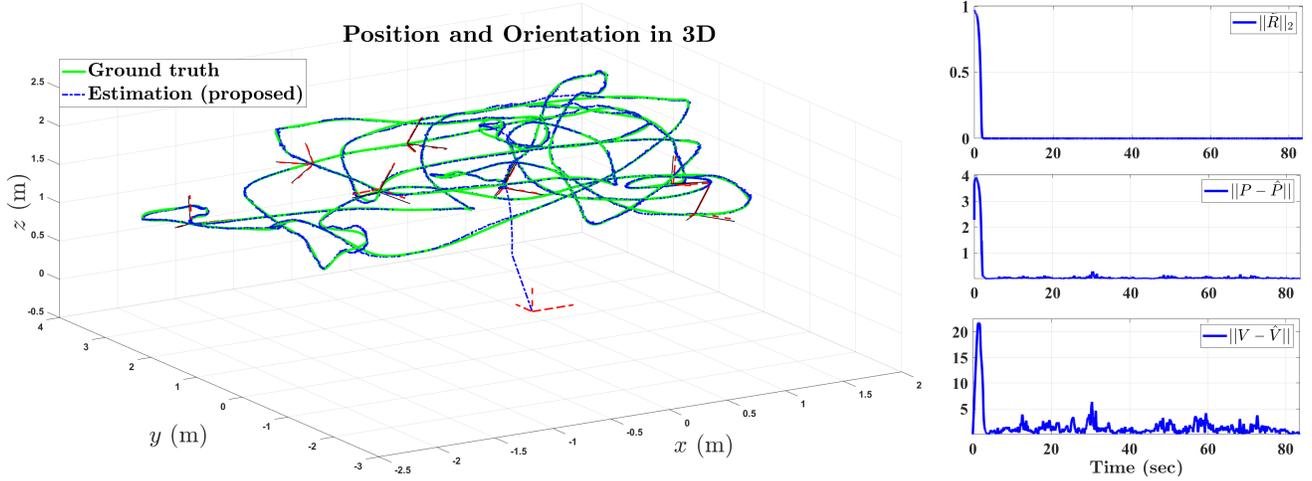}\caption{Experimental validation using Vicon Room (V1\_02\_medium) dataset.
		In the left: the true vehicle trajectory (green solid-line) is plotted
		against the trajectory estimated by the proposed nonlinear stochastic
		discrete navigation filter (Algorithm \ref{alg:Alg_Disc0}; blue dashed-line).
		The true and the estimated final destinations are marked with a green
		circle and a blue star $\star$, respectively. Three axes illustrate
		the vehicle's true (black solid-line) and estimated (red dashed-line)
		attitude. On the right: the error components of attitude $||R\hat{R}^{\top}||_{{\rm I}}$,
		position $||P-\hat{P}||$, and velocity $||V-\hat{V}||$ are plotted
		in blue solid-line. To access experiment demonstration
		visit the \href{https://youtu.be/ISUsnQbvz74}{Video Link}.}
	\label{fig:NAV_3D}
\end{figure*}

The left portion of Figure \ref{fig:NAV_3D} illustrates rapid adaptation
from large initial error in attitude and position and strong tracking
capabilities of the filter to the true trajectory. The right portion
of Figure \ref{fig:NAV_3D} reveals impressive convergence of the
error components $||R\hat{R}^{\top}||_{{\rm I}}$, $||P-\hat{P}||$,
and $||V-\hat{V}||$ from large initial error to the close neighborhood
of the origin. Also, Figure \ref{fig:NAV_3D} confirms the robustness
of the filter against the high levels of measurement noise. Figure
\ref{fig:NAV_Eul} shows remarkable tracking performance of the filter
estimated orientation in terms of Euler angles ($\hat{\phi}$, $\hat{\theta}$,
and $\hat{\psi}$) in comparison with the true vehicle's orientation
($\phi$, $\theta$, and $\psi$). Likewise, Figure \ref{fig:NAV_Pos}
illustrates impressive tracking performance of the filter estimated
position ($\hat{x}$, $\hat{y}$, and $\hat{z}$) against the true
vehicle's position ($x$, $y$, and $z$). To access experiment demonstration
visit the \href{https://youtu.be/ISUsnQbvz74}{Video Link}. The real-world
dataset experiment demonstrates the ability of the proposed stochastic
filter to produce good results at low sampling rates and in presence
of large initial error and measurement uncertainties. The performed
testing indicates that the proposed filter is computationally cheap.
As such, the proposed solution can be implemented using an inexpensive
kit.

\begin{figure}[h]
	\centering{}\includegraphics[scale=0.32]{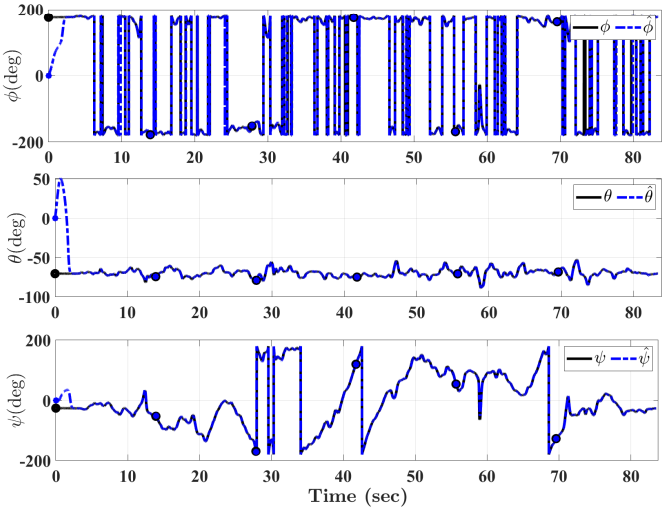}\caption{Experimental validation using Vicon Room (V1\_02\_medium) dataset.
		Euler angles: True trajectory shown as a black solid line ($\phi$,
		$\theta$, and $\psi$) vs estimated trajectory shown as a blue dashed-line
		($\hat{\phi}$, $\hat{\theta}$, and $\hat{\psi}$).}
	\label{fig:NAV_Eul}
\end{figure}

\begin{figure}[h]
	\centering{}\includegraphics[scale=0.32]{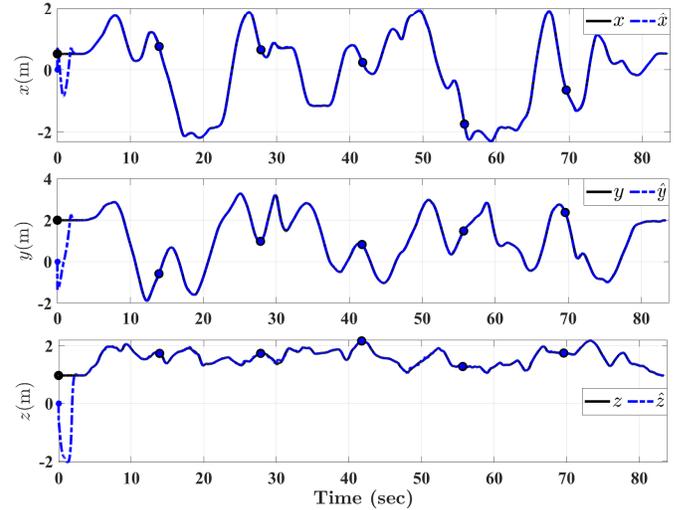}\caption{Experimental validation using Vicon Room (V1\_02\_medium) dataset.
		Position in 3D space: True trajectory shown as a black solid line
		($x$, $y$, and $z$) vs estimated trajectory shown as a blue dashed-line
		($\hat{x}$, $\hat{y}$, and $\hat{z}$).}
	\label{fig:NAV_Pos}
\end{figure}

\section{Conclusion \label{sec:SE3_Conclusion}}

This paper addressed the problem of attitude, position, and linear
velocity estimation of a rigid-body navigating with six degrees of
freedom (6 DoF). A geometric nonlinear stochastic navigation filter
on the matrix Lie group of $\mathbb{SE}_{2}(3)$ with guaranteed transient
and steady-state performance has been proposed. The closed loop error
signals have been shown to be almost semi-globally uniformly ultimately
bounded in the mean square. The proposed filter produces good results
given measurements supplied by low-cost inertial measurement and vision
units. Experiments with real-world data obtained from quadrotor have
demonstrated the strong tracking capabilities of the proposed filter
as it rapidly and accurately estimates the unknown pose and linear
velocity in presence of large initial attitude, position, and linear
velocity error and high level of measurement uncertainties. 

\section*{Acknowledgment}
The authors would like to thank \textbf{Maria Shaposhnikova} for proofreading
the article.

\subsection*{Appendix A\label{subsec:Appendix-A}}
\begin{center}
	\textbf{Quaternion of the Proposed Observers}
	\par\end{center}

\noindent Consider $Q=[q_{0},q^{\top}]^{\top}\in\mathbb{S}^{3}$ as
a unit-quaternion vector where $\mathbb{S}^{3}=\{\left.Q\in\mathbb{R}^{4}\right|||Q||=\sqrt{q_{0}^{2}+q^{\top}q}=1\}$
with $q_{0}\in\mathbb{R}$ and $q\in\mathbb{R}^{3}$. Let $\hat{Q}\in\mathbb{S}^{3}$
be the estimate of $Q\in\mathbb{S}^{3}$ and consider the mapping
from $\mathbb{S}^{3}$ to $\mathbb{SO}(3)$ to be $\mathcal{R}_{\hat{Q}}=(\hat{q}_{0}^{2}-||\hat{q}||^{2})\mathbf{I}_{3}+2\hat{q}\hat{q}^{\top}+2\hat{q}_{0}\left[\hat{q}\right]_{\times}\in\mathbb{SO}\left(3\right)$.
Recall the vector measurements in \eqref{eq:NAV_Set_Measurements}:

\noindent 
\begin{equation}
\begin{cases}
p_{c} & =\frac{1}{s_{T}}\sum_{i=1}^{n}s_{i}p_{i},\hspace{1em}s_{T}=\sum_{i=1}^{n}s_{i}\\
M & =\sum_{i=1}^{n}s_{i}p_{i}p_{i}^{\top}-s_{T}p_{c}p_{c}^{\top}\\
M\tilde{R} & =\sum_{i=1}^{n}s_{i}(p_{i}-p_{c})y_{i}^{\top}\mathcal{R}_{\hat{Q}}^{\top}\\
\tilde{R}^{\top}\tilde{P}_{\varepsilon} & =\frac{1}{s_{T}}\sum_{i=1}^{n}s_{i}(p_{i}-\mathcal{R}_{\hat{Q}}y_{i}-\hat{P})
\end{cases}\label{eq:NAV_Set_MeasurementsQ}
\end{equation}
where $\boldsymbol{\Upsilon}(M\tilde{R})=\mathbf{vex}(\boldsymbol{\mathcal{P}}_{a}(M\tilde{R}))$
and $||M\tilde{R}||_{{\rm I}}=\frac{1}{4}{\rm Tr}\{M(\mathbf{I}_{3}-\tilde{R})\}$.
Define $[e_{1},e_{2},e_{3},e_{4}]^{\top}=\left[||M\tilde{R}||_{{\rm I}},\tilde{P}_{\varepsilon}^{\top}\tilde{R}\right]^{\top}$such
that
\begin{equation}
\begin{cases}
E_{i} & =\frac{1}{2}\text{ln}\frac{\delta_{i}+e_{i}[k]/\xi_{i}[k]}{\delta_{i}-e_{i}[k]/\xi_{i}[k]}\\
\Delta_{i} & =\frac{1}{2\xi_{i}[k]}(\frac{1}{\delta_{i}+e_{i}[k]/\xi_{i}[k]}+\frac{1}{\delta_{i}-e_{i}[k]/\xi_{i}[k]})
\end{cases}\label{eq:PPF_Q}
\end{equation}
for all $i=1,\ldots,4$. Let $E=\left[E_{R},E_{P}^{\top}\right]^{\top}$,
$\Delta_{R}=\Delta_{1}$, and $\Delta_{P}={\rm diag}(\Delta_{2},\Delta_{3},\Delta_{4})$.
The equivalent quaternion-based navigation filter in \eqref{eq:NAV_Filter1_Detailed}
is given below:

\begin{equation}
\begin{cases}
\Theta_{m}= & \left[\begin{array}{cc}
0 & -\Omega_{m}^{\top}\\
\Omega_{m} & -[\Omega_{m}]_{\times}
\end{array}\right],\hspace{1em}\Psi=\left[\begin{array}{cc}
0 & -w_{\Omega}^{\top}\\
w_{\Omega} & [w_{\Omega}]_{\times}
\end{array}\right]\\
\dot{\hat{Q}} & =\frac{1}{2}\Theta_{m}\hat{Q}-\frac{1}{2}\Psi\hat{Q}\\
\dot{\hat{P}} & =\hat{V}-\left[w_{\Omega}\right]_{\times}\hat{P}-w_{V}\\
\dot{\hat{V}} & =\mathcal{R}_{\hat{Q}}a_{m}+\overrightarrow{\mathtt{g}}-\left[w_{\Omega}\right]_{\times}\hat{V}-w_{a}\\
w_{\Omega} & =-k_{w}(\Delta_{R}E_{R}+1)\boldsymbol{\Upsilon}(M\tilde{R})\\
& \hspace{1em}-\frac{\Delta_{R}}{4}\frac{||M\tilde{R}||_{{\rm I}}+2}{||M\tilde{R}||_{{\rm I}}+1}\hat{R}{\rm diag}(\hat{R}^{\top}\boldsymbol{\Upsilon}(M\tilde{R}))\hat{\sigma}\\
w_{V} & =\left[p_{c}\right]_{\times}w_{\Omega}-\frac{k_{v}}{\varepsilon}\Delta_{P}E_{P}-\ell_{P}\tilde{R}^{\top}\tilde{P}_{\varepsilon}\\
w_{a} & =-\overrightarrow{\mathtt{g}}-k_{a}\left(\frac{k_{v}}{\mu}\Delta_{P}+\mathbf{I}_{3}\right)\Delta_{P}E_{P}\\
k_{R} & =\gamma_{\sigma}\frac{||M\tilde{R}||_{{\rm I}}+2}{8}\Delta_{R}^{2}\exp(E_{R})\\
\dot{\hat{\sigma}}_{\Omega} & =k_{R}{\rm diag}(\hat{R}^{\top}\boldsymbol{\Upsilon}(M\tilde{R}))\hat{R}^{\top}\boldsymbol{\Upsilon}(M\tilde{R})-k_{\sigma}\gamma_{\sigma}\hat{\sigma}
\end{cases}\label{eq:NAV_Filter1_Detailed_Q}
\end{equation}

{\small
\rule{0.47\textwidth}{1pt}\\
Bibtex citation: \\
@article\{hashim2021Geom, \\
title=\{Geometric Stochastic Filter with Guaranteed Performance for
Autonomous Navigation based on IMU and Feature Sensor Fusion\}, \\
author=\{Hashim, Hashim A and Abouheaf, Mohammed and Abido, Mohammad
A\}, \\
journal=\{Control Engineering Practice\}, \\
volume=\{116\}, \\
pages=\{104926\}, \\
year=\{2021\} \\
\}\\
doi: \href{https://doi.org/10.1016/j.conengprac.2021.104926}{10.1016/j.conengprac.2021.104926}\\
Video URL: \href{https://youtu.be/ISUsnQbvz74}{youtu.be/ISUsnQbvz74}\\
\rule{0.49\textwidth}{1pt}
}

\vspace{10pt}

\bibliographystyle{IEEEtran}
\bibliography{bib_Navigation}

\end{document}